\newtheorem{lemma}{Lemma}[section]
\newtheorem{theorem}[lemma]{Theorem}
\newtheorem{definition}[lemma]{Definition}
\newtheorem{corollary}[lemma]{Corollary}
\newtheorem{remark}[lemma]{Remark}
\newcommand{\opt}{\textrm{\sc OPT}}
\newcommand{\etal}{et al.\ }
\newcommand{\eps}{\epsilon}
\newcommand{\Algorithm}[1]{{\texttt{\bf{#1}}}} 
\newcommand{\lwf}{\Algorithm{LWF}}
\newcommand{\one}{\texttt{Type1}}
\newcommand{\two}{\texttt{Type2}}
\newcommand{\grdy}{\Algorithm{LF}}
\newcommand{\len}{\mathfrak{\rho}}
\newcommand{\floor}[1]{\lfloor #1 \rfloor}
\newcommand{\ceil}[1]{\lceil #1 \rceil}
\newcommand{\fe}{F^e}
\newcommand{\fl}{F^l}
\newcommand{\lwfs}{\lwf^{S}}
\newcommand{\lwfne}{\lwf^{N^e}}
\newcommand{\lwfneone}{\lwf^{N_1^e}}
\newcommand{\lwfnetwo}{\lwf^{N_2^e}}
\newcommand{\lwfnl}{\lwf^{N^l}}
\newcommand{\opts}{\opt^{S}}
\newcommand{\optn}{\opt^{N}}
\newcommand{\cJ}{{\cal J}}
\newcommand{\spone}{-3.5mm}
\newcommand{\spbox}{-4mm}
\begin{document}

\title{Longest Wait First for Broadcast Scheduling}
\author{
Chandra Chekuri\thanks{Department of Computer Science, University of Illinois, 201 N.\ Goodwin Ave., Urbana, IL 61801.
{\tt chekuri@cs.uiuc.edu}. Partially supported by NSF grants CCF-0728782
 and CNS-0721899. }
 \and Sungjin Im\thanks{Department of Computer Science, University of
Illinois, 201 N.\ Goodwin Ave., Urbana, IL 61801. {\tt im3@uiuc.edu}} \and Benjamin Moseley\thanks{Department of
Computer Science, University of Illinois, 201 N.\ Goodwin Ave., Urbana, IL 61801. {\tt bmosele2@uiuc.edu}. Partially
supported by NSF grant CNS-0721899. } }
\date{\today}
\maketitle

\begin{abstract}
  We consider {\em online} algorithms for broadcast scheduling. In the
  pull-based broadcast model there are $n$ unit-sized pages of
  information at a server and requests arrive online for pages. When
  the server transmits a page $p$, all outstanding requests for that
  page are satisfied. There is a lower bound of $\Omega(n)$ on the
  competitiveness of online algorithms to minimize average flow-time;
  therefore we consider resource augmentation analysis in which the
  online algorithm is given extra speed over the adversary. The {\em
    longest-wait-first} (LWF) algorithm is a natural algorithm that
  has been shown to have good empirical performance
  \cite{AksoyF98}. Edmonds and Pruhs showed that LWF is $6$-speed
  $O(1)$-competitive using a very complex analysis; they also showed
  that LWF is not $O(1)$-competitive with less than $1.618$-speed. In  this paper we make two main contributions to the analysis of LWF and
  broadcast scheduling.
  \begin{itemize}
  \item We give an intuitive and easy to understand analysis of LWF
    which shows that it is $O(1/\eps^2)$-competitive for average
    flow-time with $(4+\eps)$ speed. Using a more involved analysis, we
    show that LWF is $O(1/\eps^3)$-competitive for average flow-time
    with $(3.4+\epsilon)$ speed.
  \item We show that a natural extension of LWF is $O(1)$-speed
    $O(1)$-competitive for more general objective functions such as
    average delay-factor and $L_k$ norms of delay-factor (for fixed
    $k$). These metrics generalize average flow-time and $L_k$ norms
    of flow-time respectively and ours are the first non-trivial
    results for these objective functions in broadcast scheduling.
  \end{itemize}
\end{abstract}


\newpage
\setcounter{page}{1}

\section{Introduction}
We consider online algorithms for broadcast scheduling in the
pull-based model. In this model there are $n$ pages (representing some
form of useful information) available at a server and clients request
a page that they are interested in. The server {\em broadcasts} pages
according to some online policy and {\em all} outstanding requests for
a page are satisfied when that page is transmitted/broadcast. This is
what distinguishes this model from the standard scheduling models
where the server has to process each request separately. Broadcast
scheduling is motivated by several applications. Example situations
where the broadcast assumption is natural include wireless and
satellite networks, LAN based systems and even some multicast
systems. See \cite{Wong88,AcharyaFZ95,AksoyF98,Hall03} for pointers to
applications and systems that are based on this model. In addition to
their practical interest, broadcast scheduling has been of much
interest in recent years from a theoretical point of view. There is by
now a good amount of literature in online and offline algorithms in
this model \cite{BarnoyBNS98,AksoyF98,AcharyaFZ95,BartalM00,Hall03}. There is
also substantial work in the stochastic and queuing theory literature
\cite{DebS73,Deb84,Weiss79,WeissP81} on related models which make
distributional assumptions on the request arrivals. In a certain
sense, $\lwf$ can be shown to be optimal when page arrivals are
independent and assumed to have a Poisson distribution
\cite{AmmarW85}.


It is fair to say that algorithmic development and analysis for
broadcast scheduling have been challenging even in the simplest
setting of unit-sized pages; so much so that a substantial amount of
technical work has been devoted to the development of {\em offline}
approximation algorithms
\cite{KalyanasundaramPV00,ErlebachH02,GandhiKKW04,GandhiKPS06,BansalCKN05,BansalCS06};
many of these offline algorithms are non-trivial and are based on
linear programming based methods. Further, most of these offline
algorithms, with the exception of \cite{BansalCS06}, are in the
resource augmentation model of Kalyanasundaram and Pruhs
\cite{KalyanasundaramP95} in which the analysis is done by giving the
algorithm a machine with speed $s > 1$ when compared to a speed $1$
machine for the adversary.  In this paper we are interested in {\em
  online} algorithms in the worst-case competitive analysis
framework. We consider the problem of minimizing average flow-time (or
waiting time) of requests and other more stringent objective
functions. It is easy to show an $\Omega(n)$ lower bound on the
competitive ratio \cite{KalyanasundaramPV00} of any deterministic
algorithm and hence we also resort to resource augmentation analysis.
For average flow-time three algorithms are known to be
$O(1)$-competitive with $O(1)$-speed.  The first is the natural {\em
  longest-wait-first} ($\lwf$) algorithm/policy: at any time $t$ that
the server is free, schedule the page $p$ for which the total waiting
time of all outstanding requests for $p$ is the largest. Edmonds and
Pruhs \cite{EdmondsP04}, in a complex and original analysis, showed
that $\lwf$ is a $6$-speed $O(1)$-competitive algorithm and also that
it is not $O(1)$-competitive with a speed less than
$(1+\sqrt{5})/2$; they also conjectured that their lowerbound is tight. The same authors also gave a different algorithm
called BEQUI in \cite{EdmondsP03} and show that it is a
$(4+\eps)$-speed $O(1)$-competitive algorithm; although the algorithm
has intuitive appeal, the proof of its
performance relies on an involved reduction to an algorithm for a
non-clairvoyant scheduling problem \cite{Edmonds00} whose analysis
itself is substantially complex. The recent improved result in
\cite{EdmondsP09} for the non-clairvoyant problem when combined with
the reduction mentioned above leads to a $(2+\eps)$-speed
$O(1)$-competitive algorithm; however the new algorithm requires the
knowledge of $\epsilon$ and hence is not as natural as the other
algorithms. The preemptive algorithms in \cite{EdmondsP03,EdmondsP09} are also
applicable when the page sizes are arbitrary; see
\cite{Hall03} for empirical evaluation in this model.
At a technical level, a main difficulty in online analysis for broadcast
scheduling is the fact shown in \cite{KalyanasundaramPV00} that no
online algorithm can be {\em locally}-competitive with an adversary\footnote{An algorithm is locally-competitive if at each time $t$, its queue size
is comparable to that of the queue size of the adversary. Many results in
standard scheduling are based on showing local-competitiveness.}.

We focus on the $\lwf$ algorithm in the setting of unit-sized
pages. In addition to being a natural greedy policy, it has been shown
to outperform other natural policies \cite{AksoyF98}; moreover,
related variants are known to be optimal in certain stochastic
settings.  It is, therefore, of interest to better understand its
performance. We are motivated by the following questions. Is there a
simpler and more intuitive analysis of $\lwf$ for broadcast scheduling
than the analysis presented in \cite{EdmondsP04}?  Can we close the
gap between the upper and lower bounds on the speed requirement of
$\lwf$ to guarantee constant competitiveness?  Can we obtain
competitive algorithms for more stringent objective functions than
average flow-time such as $L_k$ norms of flow-time, average
delay-factor\footnote{Delay-factor is a recently introduced metric and
  we describe it more formally later.}  and $L_k$ norms of
delay-factor? We give positive answers to these questions.

\medskip
\noindent {\bf Results:} Our results are for unit-size pages. We make two
contributions.
\begin{itemize}
\item We give a simple and intuitive analysis of $\lwf$ that already
  improves the speed bound in \cite{EdmondsP04}; the analysis shows
  that $\lwf$ is $(4+\eps)$-speed $O(1/\epsilon^2)$-competitive for
  average flow time. Using a more complex analysis, we show that
  $\lwf$ is $(3.4+\eps)$-speed $O(1/\epsilon^3)$-competitive.
\item We show that a natural generalization of $\lwf$ that we call
  $\grdy$ is $O(k)$-speed $O(k)$-competitive for minimizing the $L_k$
  norm of flow time --- these bounds extend to average delay
  factor and $L_k$ norms of delay factor. These are the first
  non-trivial results for $L_k$ norms in broadcast scheduling for $k >
  1$.
\end{itemize}
$L_k$ norms for flow-time for some small $k >1$ such as $k=2,3$ have
been suggested as alternate and robust metrics of performance; see
\cite{BansalP03,Pruhs07} for more on this. Our results show that
$\lwf$-like algorithms have reasonable theoretical performance even
for these more difficult metrics. We derive these additional results
in a unified fashion via a general framework that is made possible by
our simpler analysis for $\lwf$. In our recent work \cite{ChekuriIM09}
we show that $\grdy$ is not $O(1)$-competitive with any constant speed
for the $L_\infty$-norm of delay factor. This suggests that $\grdy$
may require a speed that increases with $k$ to obtain
$O(1)$-competitiveness for $L_k$ norms. We note that the algorithms in
\cite{EdmondsP03,EdmondsP09} that perform well for average flow time
do not easily extend to the more general objective functions that we
consider.

Our analysis of $\lwf$ borrows several key ideas from
\cite{EdmondsP04}, however, we make some crucial simplifications. We
outline the main differences in Section~\ref{sec:overview} where we
give a brief overview of our approach.

\medskip
\noindent {\bf Notation and Formal Definitions:} We assume that
the server has $n$ distinct unit-sized pages of
information.  We use $J_{p,i}$ to denote $i$'th request for a page
$p \in \{1, \ldots, n\}$.  We let $a_{p,i}$ denote the arrival
time of the request $J_{p,i}$. The finish time $f_{p,i}$ of a
request $J_{p,i}$ under a given schedule/algorithm is defined to
be the earliest time after $a_{p,i}$ when the page $p$ is
sequentially transmitted by the scheduler; to avoid notational
overload we assume that the algorithm is clear from the context.
Note that multiple requests for the same page can have the same
finish time. The total flow time for an algorithm over a sequence
of requests is now defined as $\sum_p \sum_i (f_{p,i} - a_{p,i})$.
Delay-factor is a recently introduced metric in scheduling
\cite{ChangEGK08,BenderCT08,ChekuriM09}.  In the context of
broadcast scheduling, each request $J_{p,i}$ has a soft deadline
$d_{p,i}$ that is known upon its arrival. The slack of $J_{p,i}$
is $d_{p,i} - a_{p,i}$. The delay-factor of $J_{p,i}$ with finish
time $f_{p,i}$ is defined to be $\max(1, \frac{f_{p,i} -
a_{p,i}}{d_{p,i} - a_{p,i}})$; in other words it is the ratio of
the waiting time of the request to its slack. It can be seen that
delay-factor generalizes flow-time since we can set $d_{p,i} =
a_{p,i} + 1$ for each (unit-sized) request $J_{p,i}$. Given a
scheduling metric such as flow-time or delay-factor that, for each
schedule assigns a value $m_{p,i}$ to a request $J_{p,i}$, one can
define the $L_k$ norm of this metric in the usual way as
$\sqrt[k]{\sum_{(p,i)} m_{p,i}^k}$. Note that minimizing the sum
of flow-times or delay-factors is simply the $L_1$ norm problem.
In resource augmentation analysis, the online algorithm is given a
faster machine than the optimal offline algorithm. For $s \ge 1$,
an algorithm $A$ is $s$-speed $r$-competitive if $A$ when the
given $s$-speed machine achieves a competitive ratio of $r$.

In this paper we assume, for simplicity, the discrete time model. In
this model, at each integer time $t$, the following things happen
exactly in the following order; the scheduler make a decision of which
page $p$ to broadcast; the page $p$ is broadcast and all outstanding
requests of page $p$ are immediately satisfied, thus having finish
time $t$; new requests arrive. Note that new pages which arrive at $t$
are not satisfied by the broadcasting at the time $t$. It is important
to keep it in mind that all these things happen only at integer times.
See \cite{EdmondsP04} for more discussion on discrete time versus
continuous time models. For the most part, we assume for simplicity
of exposition, that the algorithm is given an integer speed $s$ which
implies that the algorithm schedules (at most) $s$ requests in each
time slot.  For this reason we present our analysis for $5$-speed and
$4$-speed which extend to $(4+\eps)$-speed and $(3.4+\eps)$-speed
respectively.  Due to space constraints we defer the details of the
extensions.

\noindent {\bf Our Recent Results:} Very recently, we
have proved that $\lwf$ is $(2.777 + \eps)$-speed $O(1)$-competitive. 
Also we have shown that $\lwf$ is
not $O(1)$-competitive with $(2- \eps)$-speed for any $\eps>0$, 
which disproves Edmonds and Pruhs's conjecture that 
their lowerbound $(1 + \sqrt{5})/2$
is tight \cite{EdmondsP04}.
The analysis of these new results is complex and
built on the work of this paper. 
Due to insufficient space, we will
include these new results in the full version. 

\medskip
\noindent {\bf Related Work:} We give a very brief description of
related work due to space constraints. We refer the reader to the
survey on online scheduling by Pruhs, Sgall and Torng \cite{PruhsST}
for a comprehensive overview of results and algorithms (see also
\cite{Pruhs07}).  Broadcast scheduling has seen a substantial amount
of research in recent years; apart from the work that we have already
cited we refer the reader to
\cite{KalyanasundaramPV00,CharikarK06,KhullerK04}, the recent paper of
Chang \etal \cite{ChangEGK08}, and the surveys \cite{PruhsST,Pruhs07}
for several pointers to known results.  As we mentioned already, a
large amount of the work on broadcast scheduling has been on offline
algorithms including NP-hardness results and approximation algorithms
(often with resource augmentation). With few exceptions
\cite{EdmondsP03}, almost all the work has focused on the unit-size
page assumption. Apart from the work on average flow-time that has
been mentioned before, the other work on online algorithms for
flow-time are the following. Bartal and Muthukrishnan
\cite{BartalM00,ChangEGK08} showed that the first-come-first-serve
rule (FCFS) is $2$-competitive for maximum flow-time. More recently,
Chekuri and Moseley \cite{ChekuriM09} developed a $(2+\eps)$-speed
$O(1/\eps^2)$-competitive algorithm for maximum delay-factor; we note
that this algorithm requires knowledge of $\epsilon$. Constant
competitive online algorithms for maximizing throughput are given in
\cite{Kimc04,ChanLTW04,ZhengFCCPW06,ChrobakDJKK06}. Algorithms to
minimize $L_k$ norms of flow-time in the context of standard
scheduling have been studied in \cite{BansalP03} and
\cite{ChekuriGKK04}. \vspace{\spone}

\subsection{Overview of Analysis}\vspace{-2mm}
\label{sec:overview} We give a high level overview of our analysis of
$\lwf$.  Let $\opt$ denote some fixed optimal 1-speed offline
solution; we overload notation and use $\opt$ also to denote the value
of the optimal schedule. Recall that for simplicity of analysis, we
assume the discrete-time model in which requests arrive at integer
times. For the same reason we analyze $\lwf$ with an integer speed $s
> 1$. We can assume that $\lwf$ is never idle. Thus, in each time step
$\lwf$ broadcasts $s$ pages and the optimal solution broadcasts $1$
page. We also assume that requests arrive at integer times.  At time
$t$, a request is in the set $U(t)$ if it is unsatisfied by the
scheduler at time $t$. In the broadcast setting $\lwf$ with speed $s$
is defined as the following. \vspace{\spbox}\vspace{-2mm}

\begin{center}
\begin{tabular}[r]{|c|}
\hline
\textbf{Algorithm}: $\lwf_s$ \\

\begin{minipage}{\textwidth}
\begin{itemize}
\item At any integer time $t$, broadcast the $s$ pages with the
  largest waiting times, where the waiting time of page $p$ is
  $\sum_{J_{p,i} \in U(t)} (t-a_{p,i})$.
\end{itemize}
\vspace{-7mm}
\end{minipage}\\\\

\hline
\end{tabular}
\end{center}\vspace{-2mm}

Our analysis of $\lwf$ is inspired by that in \cite{EdmondsP04}.  Here we summarize our approach and indicate the main
differences from the analysis in \cite{EdmondsP04}. Given the schedule of $\lwf_s$ on a request sequence $\sigma$, the
requests are partitioned into two disjoint sets $S$ (self-chargeable requests) and $N$ (non-self-chargeable requests).
Let the total flow time accumulated by $\lwf_s$ for requests in $S$ and $N$ be denoted by $\lwf_s^{S}$ and $\lwf_s^{N}$
respectively. Likewise, let $\opt^S$ and $\opt^N$ be the flow-time $\opt$ accumulates for requests in $S$ and $N$,
respectively. $S$ is the set of requests whose flow-time is comparable to their flow-time in $\opt$. Hence one
immediately obtains that $\lwf_s^{S} \le \len \opt^S$ for some constant $\len$.  For requests in $N$, instead of
charging them only to the optimal solution, these requests are charged to the total flow time accumulated by $\lwf$
{\em
  and} $\opt$. It will be shown that $\lwf_s^{N} \leq \delta \lwf_s +
\len \opt^N$ for some $\delta < 1$; this is crux of the proof. It follows that $\lwf_s = \lwf_s^{S} + \lwf_s^{N} \leq
\len \opt^S + \len \opt^N + \delta \lwf \leq \len \opt + \delta \lwf$. This shows that $\lwf_s \leq
\frac{\len}{1-\delta} \opt$, which will complete our analysis. Perhaps the key idea in \cite{EdmondsP04} is the idea of
charging $\lwf_s^N$ to $\lwf_s$ with a $\delta < 1$; as shown in \cite{KalyanasundaramPV00}, no algorithm for any
constant speed can be locally competitive with respect to all adversaries and hence previous approaches in the
non-broadcast scheduling context that establish local competitiveness with respect to $\opt$ cannot work.

In \cite{EdmondsP04}, the authors do not charge $\lwf_s^N$ directly to $\lwf_s$. Instead, they further split $N$ into
two types and do a much more involved analysis to bound the flow-time of the type 2 requests via the flow-time of type
1 requests. Moreover, they first transform the given instance to canonical instance in a complex way and prove the
correctness of the transformation. Our simple proof shows that these complex arguments can be done away with. We also
improve the speed bounds and generalize the proof to other objective functions. \vspace{\spone}

\subsection{Preliminaries}
To show that $\lwf_s^{N} \leq \delta \lwf_s + \rho \opt^N$, we will map the requests in $N$ to other requests scheduled
by $\lwf_s$ which have comparable flow time.  An issue that can occur when using a charging scheme is that one has to
be careful not to overcharge.  In this setting, this means for a single request $J_{p,i }$ we must bound of the number
of requests in $N$ which are charged to $J_{p,i}$.  To overcome the overcharging issue, we will appeal to a
generalization of Hall's theorem.  Here we will have a bipartite graph $G = (X \cup Y)$ where the vertices in $X$ will
correspond to requests in $N$.  The vertices in $Y$ will correspond to all requests scheduled by $\lwf_s$.  A vertex $u
\in X$ will be adjacent to a vertex $v \in Y$ if $u$ and $v$ have comparable flow time and $v$ was satisfied while $u$
was in our queue and unsatisfied; that is, $u$ can be charged to $v$.  We then use a simple generalization of Hall's
theorem, which we call \emph{Fractional Hall's Theorem}. Here a vertex of $u \in X$ is matched to a vertex of $v \in Y$
with weight $\ell_{u,v}$ where $\ell_{u,v}$ is not necessarily an integer. Note that a vertex can be matched to
multiple vertices.
\vspace{-2mm}

\begin{definition} [$c$-covering]
  \label{def:covering} Let $G = (X \cup Y, E)$ be a bipartite graph
  whose two parts are $X$ and $Y$, and let $\ell : E \rightarrow
  [0,1]$ be an edge-weight function.  We say that $\ell$ is a
  $c$-covering if for each $u \in X$, $\sum_{(u,v) \in E} \ell_{u,v} =
  1$ and for each $v \in Y$, $\sum_{(u,v) \in E} \ell_{u,v} \leq c$.
\end{definition}

The following lemma follows easily from either Hall's Theorem or the Max-Flow Min-Cut Theorem.
\vspace{-1mm}
\begin{lemma}[Fractional Hall's theorem]
 \label{lem:Hall_thm}
 Let $G = (V = X \cup Y, E)$ be a bipartite graph whose two parts are
 $X$ and $Y$, respectively. For a subset $S$ of $X$, let $N_G(S) = \{v
 \in Y| uv \in E $, $ u \in S\}$, be the neighborhood of $S$. For
 every $S \subseteq X$, if $ |N_G(S)| \geq {1 \over c}|S|$, then there
 exists a $c$-covering for $X$.
\end{lemma}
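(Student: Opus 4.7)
The plan is to construct an $s$-$t$ flow network in which a feasible flow of value $|X|$ encodes exactly a $c$-covering of $X$, and then verify the min-cut bound using the neighborhood hypothesis. Specifically, I would introduce a source $s$ and sink $t$, add an edge $s \to u$ of capacity $1$ for each $u \in X$, an edge $u \to v$ of infinite capacity for each $(u,v) \in E$, and an edge $v \to t$ of capacity $c$ for each $v \in Y$. Given any feasible flow of value $|X|$, every vertex of $X$ is saturated, so setting $\ell_{u,v}$ equal to the flow on $(u,v)$ gives $\sum_{v} \ell_{u,v} = 1$ for each $u \in X$, while flow conservation at $v$ together with the capacity $c$ on $v \to t$ gives $\sum_{u} \ell_{u,v} \leq c$. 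This is precisely a $c$-covering.

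By the Max-Flow Min-Cut Theorem, it suffices to show that every $s$-$t$ cut in this network has capacity at least $|X|$. Consider any cut of finite capacity and let $S \subseteq X$ be the vertices of $X$ on the source side. Because the $u \to v$ edges have infinite capacity, no such edge may be cut, forcing $N_G(S)$ to lie entirely on the source side. The cut therefore contains the edges $s \to u$ for every $u \in X \setminus S$ and the edges $v \to t$ for every $v \in N_G(S)$, so its capacity is at least
\[
(|X| - |S|) + c \, |N_G(S)| \;\geq\; (|X| - |S|) + c \cdot \tfrac{1}{c}|S| \;=\; |X|,
\]
where the inequality is exactly the hypothesis $|N_G(S)| \geq |S|/c$. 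Hence the maximum flow has value $|X|$, and the induced edge weights form the required $c$-covering.

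There is no real obstacle to this argument: the hypothesis has been chosen precisely so that the min-cut bound works out. The only minor wrinkle would arise if one insisted on an integral formulation — e.g., replacing each $v \in Y$ by $\lceil c \rceil$ copies and invoking the classical Hall's theorem — but this is unnecessary since the flow formulation handles arbitrary real $c \geq 1$ uniformly. I would therefore present only the max-flow proof, leaving the Hall-based proof as a parenthetical remark.
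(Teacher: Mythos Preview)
Your proof is correct and matches the paper's own approach: the paper does not spell out a proof but simply remarks that the lemma ``follows easily from either Hall's Theorem or the Max-Flow Min-Cut Theorem,'' and your flow-network construction is precisely the standard way to cash out the latter. The only detail you might add for completeness is that each $\ell_{u,v}$ lies in $[0,1]$ (as required by Definition~\ref{def:covering}), which is immediate since the total flow into $u$ is at most the capacity $1$ of the edge $s \to u$.
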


Throughout this paper we will discuss time intervals and unless explicitly mentioned we will assume that they are
closed intervals with integer end points. When considering some contiguous time interval $I=[s,t]$ we will say that
$|I|=t-s+1$ is the length of interval $I$; in other words it is the number of integers in $I$. For simplicity, we abuse
this notation; when $X$ is a set of closed intervals, we let $|X|$ denote the number of distinct integers in some
interval of $X$. Note that $|X|$ also can be seen as the sum of the lengths of maximal contiguous sub-intervals if $X$
is composed of non-overlapping intervals.

To be able to apply Lemma~\ref{lem:Hall_thm}, we show another lemma which will be used throughout
this paper. Lemma~\ref{lem:intervals} says that the union of some fraction of time intervals is comparable to that of the whole
time interval.

\begin{lemma}
\label{lem:intervals} Let $X = \{[s_1,t_1], \ldots, [s_k,t_k]\}$ be a finite set of closed intervals and let $X' =
\{[s'_1,t_1], \ldots, [s'_k,t_k]\}$ be an associated set of intervals such that for $1 \le i \le k$, $s'_i \in
[s_i,t_i]$ and $|[s'_i,t_i]| \ge \lambda |[s_i,t_i]|$. Then $|X'| \ge \lambda |X|$.
\end{lemma}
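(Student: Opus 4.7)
The plan is induction on $k$ after two simplifying reductions. First, reduce to the case where $U := \bigcup_i X_i$ is connected (one maximal contiguous interval): since each $X'_i \subseteq X_i$ sits inside the connected component of $U$ containing $X_i$, the statement decomposes across these components. Second, assume no $X_i$ is contained in another $X_j$: deleting a contained $X_i$ leaves $|U|$ unchanged while only possibly decreasing $|U'|$, so it suffices to prove the inequality for the reduced family. After these reductions, ordering by left endpoint gives $s_1 < \cdots < s_k$ and (by no-containment) $t_1 < \cdots < t_k$; connectedness gives $s_{i+1} \le t_i + 1$, and in particular $U_{k-1} := \bigcup_{i<k} X_i$ is the single interval $[s_1,t_{k-1}]$.

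For the inductive step, apply the hypothesis to $X_1,\ldots,X_{k-1}$ to get $|U'_{k-1}| \ge \lambda |U_{k-1}|$, where $U'_{k-1} := \bigcup_{i<k} X'_i$. Set $A := X_k \setminus U_{k-1} = [t_{k-1}+1,t_k]$ and $A' := X'_k \setminus U'_{k-1}$. Since $U'_{k-1} \subseteq U_{k-1}$ is disjoint from $A$, we have $|U'| = |U'_{k-1}| + |A'|$ and $|U| = |U_{k-1}| + |A|$, so it suffices to prove $|A'| \ge \lambda |A|$; together with the induction hypothesis this yields $|U'| \ge \lambda|U|$.

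The inequality $|A'| \ge \lambda|A|$ is the heart of the argument and is settled by a two-case analysis on the position of $s'_k$. If $s'_k \le t_{k-1}$, then $X'_k \supseteq A$, hence $A' \supseteq A$ and $|A'| \ge |A| \ge \lambda|A|$. If $s'_k \ge t_{k-1}+1$, then $X'_k \subseteq A$, hence $A' = X'_k$, so $|A'| = |X'_k| \ge \lambda|X_k|$; combined with $|A| = t_k-t_{k-1} \le t_k - s_k + 1 = |X_k|$ (using $s_k \le t_{k-1}+1$), this gives $|A'| \ge \lambda|A|$.

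The main obstacle is choosing the right interval to peel off and maintaining the right invariants during the induction. Selecting $X_k$ to be the one with the largest right endpoint is what makes the newly added region $A$ a clean right-suffix of $X_k$ and hence aligns with $X'_k$, which is also a right-suffix of $X_k$. The connectedness and no-containment reductions are essential for this: without them $A$ could be fragmented, and the simple two-case analysis above would break down.
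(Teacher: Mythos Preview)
Your proof is correct. The reductions to a connected union and to an antichain of intervals are sound, and the inductive peel-off of the rightmost interval together with the two-case analysis on $s'_k$ cleanly establishes $|A'|\ge\lambda|A|$.

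The paper, however, takes a shorter and non-inductive route. It reduces instead to the case where $I' := \bigcup_i X'_i$ is a single contiguous interval (summing over its maximal components otherwise); in that case $I := \bigcup_i X_i$ is automatically contiguous as well, say $I=[s_1,t']$ with $s_1=\min_i s_i$ and $t'=\max_i t_i$. Since every $X'_i$ shares its right endpoint with $X_i$, one has $I'=[s',t']$ for some $s'\le s'_1$, and the single hypothesis $|[s'_1,t_1]|\ge\lambda|[s_1,t_1]|$ for the leftmost interval, together with $t'\ge t_1$ and the monotonicity of $t\mapsto (t-s'_1+1)/(t-s_1+1)$, immediately yields $|I'|\ge t'-s'_1+1\ge\lambda(t'-s_1+1)=\lambda|I|$. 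In short, the paper exploits directly the structural fact that shrinkage happens only on the left, so the loss is governed by a single interval; your induction recovers the same conclusion but with more bookkeeping. Your argument, on the other hand, is more robust in that it does not hinge on spotting the ``leftmost interval controls everything'' observation, and would adapt more readily if the shrinkage were two-sided.
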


\vspace{-6mm}

\section{Minimizing Average Flow Time}\vspace{-2mm}
\label{sec:broadcast} \vspace{-2mm} We focus our attention to minimizing average flow time. A fair amount of notation
is needed to clearly illustrate our ideas.  Following \cite{EdmondsP04}, for each page, we will partition time into
intervals via \emph{events}.  Events for page $p$ are defined by $\lwf_s$'s broadcasts of page $p$. When $\lwf_s$
broadcasts page $p$ a new event occurs. An event $x$ for page $p$ will be defined as $E_{p,x} = \langle b_{p, x},
e_{p,x} \rangle$ where $b_{p,x}$ is the beginning of the event and $e_{p,x}$ is the end. Here $\lwf_s$ broadcast page
$p$ at time $b_{p,x}$ and this is the $x$th broadcast of page $p$. Then $\lwf_s$ broadcast page $p$ at time $e_{p,x}$
and this is the $(x+1)$st broadcast of page $p$. This starts a new event $E_{p,x+1}$. Therefore, the algorithm $\lwf_s$
does not broadcast $p$ on the time interval $[b_{p,x}+1, e_{p,x}-1]$. Thus, it can be seen that for page $p$, $e_{p,
x-1} = b_{p,x}$. It is important to note that the optimal offline solution may broadcast page $p$ multiple (or zero)
times during an event for page $p$.  See Figure~\ref{fig:event}.
\begin{figure}[tbh]
\begin{center}
\includegraphics[scale=.6]{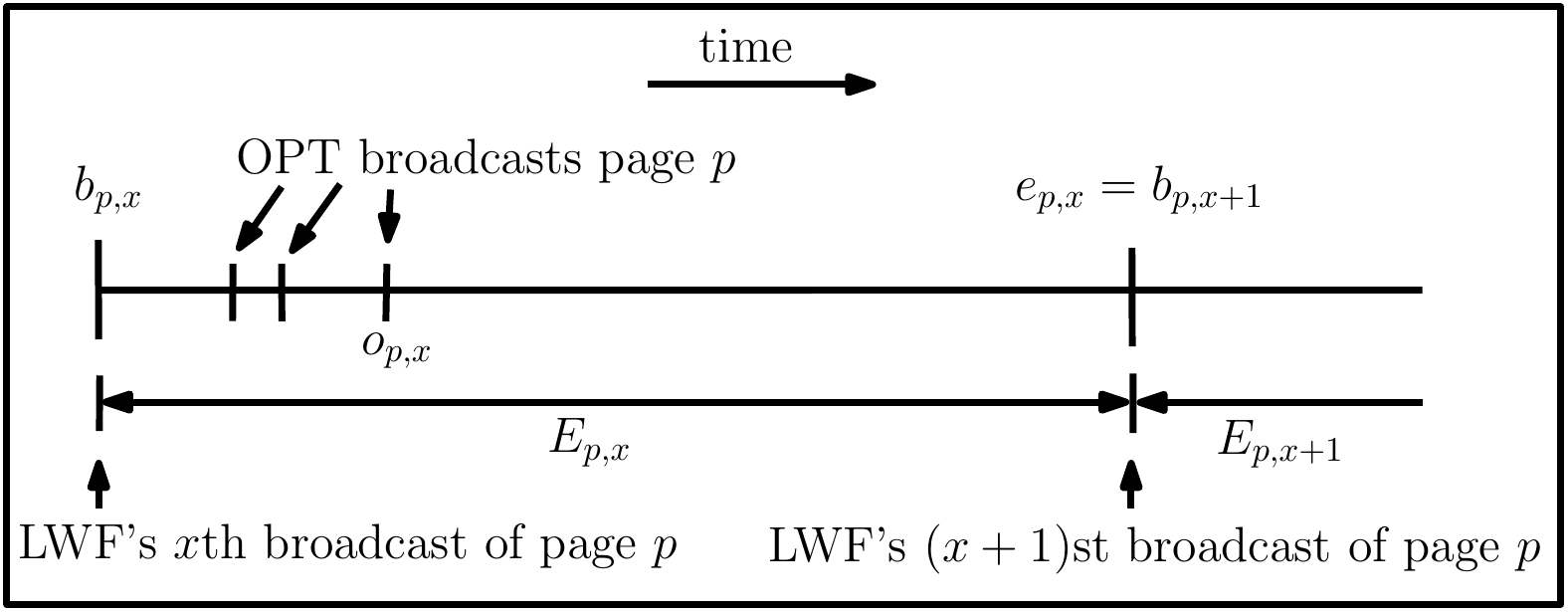}\vspace{-4mm}
\caption{Events for page $p$.} \label{fig:event}
\end{center}
\vspace{-9mm}
\end{figure}

For each event $E_{p,x}$ we let $\cJ_{p,x} = \{(p,i) \mid a_{p,i} \in [b_{p,x}, e_{p,x}-1]\}$ denote the set of
requests for $p$ that arrive in the interval $[b_{p,x}, e_{p,x}-1]$ and are satisfied by $\lwf_s$ at $e_{p,x}$. We let
$F_{p,x}$ denote the flow-time in $\lwf_s$ of all requests in $\cJ_{p,x}$. Similarly we define $F^*_{p,x}$ to be flow
time in $\opt$ for all requests in $\cJ_{p,x}$. Note that $\opt$ may or may not satisfy requests in $\cJ_{p,x}$ during
the interval $[b_{p,x}, e_{p,x}]$.

An event $E_{p,x}$ is said to be self-chargeable and in the set $S$ if
$F_{p,x} \leq F^*_{p,x}$ or $e_{p,x} - b_{p,x} < \len$, where $\len >
1$ is a constant which will be fixed later.  Otherwise the event is
non-self-chargeable and is in the set $N$. Implicitly we are
classifying the requests as self-chargeable or non-self-chargeable,
however it is easier to work with events rather than individual
requests. As the names suggest, self-chargeable events can be easily
charged to the flow-time of an optimal schedule. To help analyze the
flow-time for non-chargeable events, we set up additional notation and
further refine the requests in $N$.

Consider a non-self-chargeable event $E_{p,x}$.  Note that since this
event is non-self-chargeable, the optimal solution must broadcast page
$p$ during the interval $[b_{p,x}+1, e_{p,x}-1]$; otherwise, $F_{p,x}
\le F^*_{p,x}$ and the event is self-chargeable.  Let $o_{p,x}$ be the
last broadcast of page $p$ by the optimal solution during the interval
$[b_{p,x}+1, e_{p,x}-1]$. We define $o'_{p,x}$ for a
non-self-chargeable event $E_{p,x}$ as $\min\{o_{p,x}, e_{p,x} -
\len\}$. This ensures that the interval $[o'_{p,x},e_{p,x}]$ is
sufficiently long; this is for technical reasons and the reader should
think of $o'_{p,x}$ as essentially the same as $o_{p,x}$.

Let $\lwf^{S}_s = \sum_{p,x:E_{p,x} \in S} F_{p,x}$ and $\lwf_s^{N} =
\sum_{p,x: E_{p,x} \in N} F_{p,x}$ denote the the total flow time for
self-chargeable and non self-chargeable events
respectively. Similarly, let $\opt^{S} = \sum_{p,x: E_{p,x} \in S}
F^*_{p,x}$ and $\opt^{N} = \sum_{p,x: E_{p,x} \in N} F^*_{p,x}$.  For
a non-chargeable event $E_{p,x}$ we divide $\cJ_{p,x}$ into early
requests and late requests depending on whether the request arrives
before $o'_{p,x}$ or not.  Letting $\fe_{p,x}$ and $\fl_{p,x}$ denote
the flow-time of early and late requests respectively, we have
$F_{p,x} = \fe_{p,x} + \fl_{p,x}$. Let $\lwfne_s$ and $\lwfnl_s$
denote the total flow time of early and late requests of
non-self-chargeable events for $\lwf$'s schedule,
respectively.

The following two lemmas follow easily from the definitions.

\begin{lemma} \label{lem:SC} $\lwf_s^{S} \leq \len \opt^{S}$.
\vspace{-1mm}
\end{lemma}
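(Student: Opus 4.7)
The plan is to prove the lemma directly from the definition of self-chargeable events by showing that the inequality $F_{p,x}\le \len\, F^*_{p,x}$ holds for every individual event $E_{p,x}\in S$, and then summing over $S$.

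First I would split into the two cases in the definition of $S$. In the first case, $F_{p,x}\le F^*_{p,x}$, so since $\len>1$ we immediately get $F_{p,x}\le \len\, F^*_{p,x}$. In the second case, $e_{p,x}-b_{p,x}<\len$, and I would argue as follows. Every request in $\cJ_{p,x}$ arrives in the interval $[b_{p,x},e_{p,x}-1]$ and is satisfied by $\lwf_s$ at time $e_{p,x}$, so its flow-time under $\lwf_s$ is at most $e_{p,x}-b_{p,x}<\len$. Thus $F_{p,x}<\len\cdot|\cJ_{p,x}|$. In the discrete-time model each request incurs a flow-time of at least $1$ in any schedule, so $F^*_{p,x}\ge|\cJ_{p,x}|$, yielding $F_{p,x}<\len\cdot F^*_{p,x}$.

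Combining the two cases gives $F_{p,x}\le \len\, F^*_{p,x}$ for every $E_{p,x}\in S$. Summing,
\[
\lwf_s^{S}=\sum_{E_{p,x}\in S}F_{p,x}\le \sum_{E_{p,x}\in S}\len\, F^*_{p,x}=\len\,\opt^{S}.
\]
There is no real obstacle here: the whole point of the definition of $S$ is to make this step trivial, and the two disjunctive conditions defining self-chargeability are precisely what one needs to cover the case where either the optimum already pays comparably, or the event is so short that even the worst-case bound of one unit of $\lwf_s$-flow per request is absorbed by the $\len$ slack. The real work of the paper will be in the companion bound $\lwf_s^{N}\le \delta\,\lwf_s+\len\,\opt^{N}$ for the non-self-chargeable events, for which the Fractional Hall's Theorem machinery and the interval lemma have been set up.
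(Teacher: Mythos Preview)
Your proof is correct and matches the paper's approach. The paper states that Lemmas~\ref{lem:SC} and \ref{lem:NSCl} ``follow easily from the definitions,'' and the argument it has in mind is exactly the one you wrote: in the nontrivial case $e_{p,x}-b_{p,x}<\len$, each request in $\cJ_{p,x}$ waits at most $\len$ under $\lwf_s$ and at least $1$ under $\opt$ (by the discrete-time convention that a request arriving at time $t$ cannot be satisfied until $t+1$), giving the per-request bound $W_{p,i}\le \len\, W^*_{p,i}$ and hence $F_{p,x}\le \len\, F^*_{p,x}$.
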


\begin{lemma}
\label{lem:NSCl}
 $\lwfnl_s \leq \len  \optn$.
\end{lemma}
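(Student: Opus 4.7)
The plan is to establish the bound request-by-request: for every late request $J_{p,i}$ in a non-self-chargeable event $E_{p,x}$, I will show that its flow-time in $\lwf_s$ is at most $\len$ times its flow-time in $\opt$, and then sum over all such requests. Since the late requests of $E_{p,x}$ contribute to $F^*_{p,x}$ the quantity I am comparing against, summing over $(p,x) \in N$ gives exactly $\lwfnl_s \leq \len \, \optn$.

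To bound the per-request ratio I would do a short case analysis keyed on whether $o'_{p,x} = o_{p,x}$ or $o'_{p,x} = e_{p,x} - \len$. Let $J_{p,i}$ be a late request, so $a_{p,i} \in [o'_{p,x}, e_{p,x}-1]$; its $\lwf_s$-flow-time equals $e_{p,x} - a_{p,i}$. In the first case ($o_{p,x} \le e_{p,x} - \len$) we have $a_{p,i} \geq o_{p,x}$, and since by definition $o_{p,x}$ is the last broadcast of $p$ by $\opt$ inside $[b_{p,x}+1, e_{p,x}-1]$, the next $\opt$-broadcast of $p$ occurs at time $\geq e_{p,x}$; hence $\opt$'s flow-time for $J_{p,i}$ is at least $e_{p,x} - a_{p,i}$, matching $\lwf_s$'s. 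In the second case ($o_{p,x} > e_{p,x} - \len$), the $\lwf_s$-flow-time is at most $e_{p,x} - o'_{p,x} = \len$. I split this case further: if $a_{p,i} \geq o_{p,x}$ the same argument as before gives ratio $\le 1$; if $a_{p,i} < o_{p,x}$, then $\opt$ can satisfy $J_{p,i}$ at $o_{p,x}$, but its flow-time is still at least $1$ (requests arriving at time $t$ are not served at $t$ in the discrete model), while $\lwf_s$'s flow-time is at most $\len$, giving ratio $\le \len$.

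So in every case the per-request ratio is at most $\len$, and summing over all late requests of each non-self-chargeable event yields $\fl_{p,x} \leq \len \cdot (\text{contribution of late requests to } F^*_{p,x}) \leq \len \, F^*_{p,x}$. Summing over $(p,x) \in N$ gives the claim. The only subtlety to watch is the case where $\opt$ satisfies the late request through $o_{p,x}$ and could do so very quickly (flow-time as small as $1$); this is precisely where the multiplicative factor $\len$ is actually used, and it is made possible by the definition of $o'_{p,x}$ which caps the length of the late window at $\len$. That capping is the whole reason $o'_{p,x}$ was introduced rather than $o_{p,x}$ itself, and it is really the only nontrivial point in an otherwise direct argument.
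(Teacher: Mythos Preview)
Your proof is correct and follows essentially the same approach as the paper's: a per-request case analysis keyed on whether $o'_{p,x}=o_{p,x}$ or $o'_{p,x}=e_{p,x}-\len$, showing in the first case that $\opt$ cannot finish any late request before $e_{p,x}$ (so the ratio is at most~$1$) and in the second case that $\lwf_s$'s wait is at most $\len$ while $\opt$'s is at least~$1$. Your extra sub-split of the second case according to whether $a_{p,i}\ge o_{p,x}$ is a harmless refinement that makes the argument slightly more explicit than the paper's one-line version.
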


Thus the main task is to bound $\lwfne_s$. For a non-chargeable event
$E_{p,x}$ we try to charge $\fe_{p,x}$ to events ending in the interval
$[o'_{p,x}, e_{p,x}-1]$. The lemma below quantifies the relationship
between $\fe_{p,x}$ and the flow-time of events ending in this interval.

\begin{lemma}
  \label{lem:close} For any $0 \leq \lambda \leq 1$, if $e_{q,y} \in
  [\ceil{o'_{p,x} + \lambda (e_{p,x} - o'_{p,x})}, e_{p,x}-1]$ then $F_{q,y}
  \geq \lambda \fe_{p,x}$.
\end{lemma}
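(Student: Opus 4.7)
The plan is to exploit the defining property of $\lwf_s$: at time $e_{q,y}$, the algorithm broadcasts $q$ but does not broadcast $p$ (since the next broadcast of $p$ occurs at $e_{p,x} > e_{q,y}$). By the $\lwf_s$ rule, $q$ is among the $s$ pages with the largest waiting time at $e_{q,y}$, while $p$ is not, so the waiting time of $q$ at $e_{q,y}$ must be at least the waiting time of $p$ at that time. The remainder of the proof is to identify both of these waiting times with quantities we can directly compare to $F_{q,y}$ and $\fe_{p,x}$.

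First I would observe that by definition of the event $E_{q,y}$, the flow time $F_{q,y}$ is exactly the total waiting time that the requests in $\cJ_{q,y}$ have accrued up to the moment of the broadcast at $e_{q,y}$; that is, $F_{q,y}$ equals $\lwf_s$'s waiting time for page $q$ at $e_{q,y}$ (just before the broadcast). Next, I would bound $\lwf_s$'s waiting time for $p$ at $e_{q,y}$ from below by considering only the contribution of the early requests of $E_{p,x}$. Because $e_{q,y} \ge \lceil o'_{p,x} + \lambda(e_{p,x} - o'_{p,x}) \rceil \ge o'_{p,x}$ and each early request satisfies $a_{p,i} \le o'_{p,x}-1 < e_{q,y}$, every early request is outstanding at time $e_{q,y}$.

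The key arithmetic step is then to show $(e_{q,y} - a_{p,i}) \ge \lambda (e_{p,x} - a_{p,i})$ for every early request. This reduces to $e_{p,x} - e_{q,y} \le (1-\lambda)(e_{p,x} - a_{p,i})$. From the hypothesis on $e_{q,y}$, $e_{p,x} - e_{q,y} \le (1-\lambda)(e_{p,x} - o'_{p,x})$, and since $a_{p,i} \le o'_{p,x}-1 < o'_{p,x}$, we have $e_{p,x} - a_{p,i} > e_{p,x} - o'_{p,x}$, so the needed inequality holds. Summing over the early requests gives that $\lwf_s$'s waiting time for $p$ at $e_{q,y}$ is at least $\lambda \fe_{p,x}$. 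Combining with the $\lwf_s$ rule yields $F_{q,y} \ge \lambda \fe_{p,x}$.

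I do not expect a serious obstacle here; the main subtlety is being careful with the discrete-time conventions (the order of scheduling decision, broadcast, and arrivals within a slot) so that the identification of $F_{q,y}$ with the waiting time of $q$ at $e_{q,y}$ is exact, and so that the inequality $e_{q,y} \ge o'_{p,x}$ (which ensures early requests are already present) follows even after the ceiling in the hypothesis is handled. Once these points are set up correctly, the $\lwf_s$ priority rule and the one-line arithmetic comparison close the proof.
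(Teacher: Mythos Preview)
Your proposal is correct and follows essentially the same approach as the paper's proof: both use the $\lwf_s$ priority rule at time $e_{q,y}$ to conclude $F_{q,y} \ge F_{p,x}(e_{q,y}) \ge \fe_{p,x}(e_{q,y})$, and then bound $\fe_{p,x}(e_{q,y}) \ge \lambda \fe_{p,x}$ via the per-request inequality $e_{q,y}-a_{p,i} \ge \lambda(e_{p,x}-a_{p,i})$ for each early request. Your write-up is in fact slightly more explicit in justifying this arithmetic step than the paper's.
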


\begin{proof}
  Let $F_{p,x}(t)$ be the total waiting time accumulated by $\lwf$ for
  page $p$ on the time interval $[b_{p,x}, t]$. We divide
  $F_{p,x}(t)$ into two parts $F^e_{p,x}(t)$ and $F^l_{p,x}(t)$, which
  are the flow time due to early requests and to late requests,
  respectively. Note that $F_{p,x}(t) = F^e_{p,x}(t) + F^l_{p,x}(t)$.
  The early requests arrived before time $o'_{p,x}$, thus, for any $t'
  \ge \ceil{o'_{p,x} + \lambda (e_{p,x} - o'_{p,x})}$, $F^e_{p,x}(t')
  \geq \lambda \fe_{p,x}(e_{p,x}) = \lambda \fe_{p,x}$.

  Since $\lwf_s$ chose to transmit $q$ at $e_{q,y}$ when $p$ was
  available to be transmitted, it must be the case that $ F_{q,y} \ge
  F_{p,x}(e_{q, y}) \ge \fe_{p,x}(e_{q, y})$. Combining this with the
  fact that $\fe_{p,x}(e_{q, y}) \ge \lambda \fe_{p,x}$, the lemma follows.

\end{proof}

With the above setup in place, we now prove that $\lwf_s$ is $O(1)$
competitive for $s=5$ via a clean and simple proof, and for $s=4$ via
a more involved proof. These proofs can be extended to non-integer
speeds with better bounds on the speed. In particular, we can show
that $\lwf_{3.4 + \epsilon}$ is $O(1/\eps^3)$-competitive. We omit
these extensions in this version.

\vspace{-3mm}
\subsection{Analysis of $5$-Speed}\vspace{-2mm}
\label{sec:5speed}

This section will be devoted to proving the following main lemma that
bounds the flow-time of early requests of non self-chargeable events.

\begin{lemma}
  \label{lem:5speedNSC} For $\len \geq 1$, $\lwf_5^{N^e} \leq \frac{4
    \len}{5(\len-1)} \lwf_{5}$.
\end{lemma}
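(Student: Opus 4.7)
The plan is to bound $\fe_{p,x}$ for each non-self-chargeable event by the flow-time $\lwf_5$ accrues on events ending shortly before $e_{p,x}$, and to use Fractional Hall's Theorem (Lemma~\ref{lem:Hall_thm}) to avoid over-charging. Let $X$ be the set of non-self-chargeable events and $Y$ the set of all events produced by $\lwf_5$. Fixing $\lambda = \tfrac12$, for each $E_{p,x} \in X$ define its shrunk danger interval $I_{p,x} = [\lceil (o'_{p,x} + e_{p,x})/2 \rceil, e_{p,x}-1]$, and put an edge from $E_{p,x} \in X$ to $E_{q,y} \in Y$ precisely when $e_{q,y} \in I_{p,x}$. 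By Lemma~\ref{lem:close} applied with $\lambda = \tfrac12$, every such edge satisfies $F_{q,y} \geq \tfrac{1}{2}\fe_{p,x}$, so any $c$-covering of this graph will yield $\lwf_5^{N^e} \leq 2c \cdot \lwf_5$.

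The heart of the proof is verifying $|N_G(S)| \geq \tfrac{5(\len-1)}{2\len}|S|$ for every $S \subseteq X$. Let $\mathcal{U}(S) := \bigcup_{E_{p,x} \in S}[o'_{p,x}, e_{p,x}-1]$ and $\mathcal{U}'(S) := \bigcup_{E_{p,x} \in S} I_{p,x}$. First, $|\mathcal{U}(S)| \geq |S|$: each $E_{p,x} \in S$ is non-self-chargeable, so $\opt$ broadcasts page $p$ at some $o_{p,x} \in [o'_{p,x}+1, e_{p,x}-1] \subseteq \mathcal{U}(S)$, and the times $\{o_{p,x} : E_{p,x} \in S\}$ are pairwise distinct (events on distinct pages force distinct broadcast slots because $\opt$ has speed $1$, while events on the same page live in disjoint $[b_{p,x},e_{p,x}]$ intervals). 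Second, for each $E_{p,x} \in S$, writing $D = e_{p,x} - o'_{p,x} \geq \len$, a direct computation gives $|I_{p,x}| = D - \lceil D/2 \rceil = \lfloor D/2 \rfloor \geq (D-1)/2 \geq \tfrac{\len-1}{2\len}\,D$, since $(D-1)/(2D)$ is increasing in $D$. Lemma~\ref{lem:intervals} then yields $|\mathcal{U}'(S)| \geq \tfrac{\len-1}{2\len}\,|\mathcal{U}(S)| \geq \tfrac{\len-1}{2\len}\,|S|$. Third, $\lwf_5$ broadcasts exactly $5$ pages at every integer time, each ending an event, so $|N_G(S)| = 5\,|\mathcal{U}'(S)| \geq \tfrac{5(\len-1)}{2\len}\,|S|$ as required.

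By Lemma~\ref{lem:Hall_thm} there is then a $c$-covering $\ell$ with $c = \tfrac{2\len}{5(\len-1)}$. Using $\sum_{E_{q,y}} \ell_{E_{p,x},E_{q,y}} = 1$ for each $E_{p,x} \in X$ together with $\fe_{p,x} \leq 2 F_{q,y}$ along every edge,
\begin{align*}
\lwf_5^{N^e} &= \sum_{E_{p,x} \in X} \fe_{p,x} \sum_{E_{q,y}} \ell_{E_{p,x},E_{q,y}} \\
&\leq 2 \sum_{(E_{p,x},E_{q,y}) \in E} \ell_{E_{p,x},E_{q,y}}\, F_{q,y} \\
&\leq 2c \sum_{E_{q,y} \in Y} F_{q,y} \;=\; \frac{4\len}{5(\len-1)}\,\lwf_5,
\end{align*}
which is the claimed bound.

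I expect the main obstacle to be the Hall-type counting step, and in particular the clean inequality $|\mathcal{U}(S)| \geq |S|$: this is precisely where the speed gap between $\lwf_5$ and $\opt$ must be exploited, and it hinges on identifying, inside every danger interval of an event in $S$, a distinct integer time slot used by $\opt$. A secondary subtlety is the ceiling in the definition of $I_{p,x}$; the choice $\lambda = \tfrac12$ keeps the loss $\lceil D/2\rceil - D/2 \leq \tfrac12$, which combined with $D \geq \len$ gives the per-interval ratio $\tfrac{\len-1}{2\len}$. A cruder choice of $\lambda$ would incur a ceiling loss of $1$ and would degrade the bound to roughly $\tfrac{4\len^2}{5(\len-1)^2}$, so the dyadic choice of $\lambda$ is exactly what makes the stated constant fall out.
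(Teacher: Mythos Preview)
Your proof is correct and follows essentially the same route as the paper: the same bipartite graph (non-self-chargeable events versus all events, edges to events ending in the second half of $[o'_{p,x},e_{p,x}-1]$), the same use of Lemma~\ref{lem:close} with $\lambda=\tfrac12$, the same counting argument $|S|\le|\mathcal{U}(S)|$ via the distinct $\opt$-broadcast times $o_{p,x}$, Lemma~\ref{lem:intervals} for the shrinkage ratio $\tfrac{\len-1}{2\len}$, and the Fractional Hall's Theorem to extract a $\tfrac{2\len}{5(\len-1)}$-covering. One tiny slip: $o_{p,x}$ may equal $o'_{p,x}$ (when $o_{p,x}\le e_{p,x}-\len$), so the containment should read $o_{p,x}\in[o'_{p,x},e_{p,x}-1]$ rather than $[o'_{p,x}+1,e_{p,x}-1]$; this does not affect the argument.
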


Assuming the lemma, $\lwf_5$ is $O(1)$-competitive, using the argument
outlined earlier in Section~\ref{sec:overview}.
\begin{theorem}
  \label{thm:5spd}
    $\lwf_{5} \leq 90 \opt$.
\end{theorem}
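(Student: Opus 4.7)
The plan is to simply combine the three structural lemmas. By construction, every request of $\lwf_5$ belongs to exactly one event, and each event is either self-chargeable or non-self-chargeable; within a non-self-chargeable event, its requests are partitioned into early and late. Hence the total flow time decomposes as
\[
\lwf_5 \;=\; \lwf_5^S + \lwf_5^{N^e} + \lwf_5^{N^l}.
\]
I would then bound the three pieces using, respectively, Lemma~\ref{lem:SC}, Lemma~\ref{lem:5speedNSC}, and Lemma~\ref{lem:NSCl}, obtaining
\[
\lwf_5 \;\le\; \len\,\opt^S \;+\; \frac{4\len}{5(\len-1)}\,\lwf_5 \;+\; \len\,\opt^N \;=\; \len\,\opt \;+\; \frac{4\len}{5(\len-1)}\,\lwf_5,
\]
where the last equality uses $\opt^S + \opt^N = \opt$ (the events also partition all requests for the purpose of $\opt$'s charging).

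The next step is to move the $\lwf_5$ term on the right to the left, which requires the coefficient $\frac{4\len}{5(\len-1)}$ to be strictly less than $1$; equivalently $\len > 5$. Rearranging yields
\[
\lwf_5 \;\le\; \frac{5\len(\len-1)}{\len-5}\,\opt.
\]
To land on the clean constant $90$ advertised in the statement, I would simply plug in $\len = 10$, which gives $\frac{5\cdot 10\cdot 9}{5} = 90$. (One can check that this choice is essentially the minimizer of the prefactor over $\len > 5$, up to a small additive amount; the optimum is at $\len = 5 + 2\sqrt{5}$, and $\len=10$ is chosen for a clean bound.)

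There is no real obstacle here: the argument is purely algebraic once Lemmas~\ref{lem:SC}, \ref{lem:NSCl}, and \ref{lem:5speedNSC} are in hand. The only thing to be slightly careful about is that $\len$ must be fixed large enough both (a) to make the coefficient on $\lwf_5$ in the middle term be less than $1$, and (b) to be consistent with how $\len$ enters the definitions of self-chargeable events and of $o'_{p,x}$ in the previous lemmas. Since $\len$ was introduced in those definitions as ``a constant to be fixed later,'' choosing $\len = 10$ at this point is legitimate and uniformly valid throughout.
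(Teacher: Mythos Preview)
Your proof is correct and matches the paper's argument essentially line for line: decompose $\lwf_5$ into the three pieces, apply Lemmas~\ref{lem:SC}, \ref{lem:NSCl}, \ref{lem:5speedNSC}, and set $\len=10$ to obtain the constant $90$. The paper does exactly this (in one sentence), so there is nothing to add.
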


\begin{proof}

By combining Lemma~\ref{lem:SC}, \ref{lem:NSCl} and
\ref{lem:5speedNSC}, we have that $\lwf_{5} = \lwfs_{5} + \lwfnl_{5}
+\lwfne_{5} \leq \len \opts + \len \optn + \frac{4 \len}{5(\len-1)}
\lwf_{5}$. Setting $\len = 10$ completes the proof.
\end{proof}

We now prove Lemma~\ref{lem:5speedNSC}. In the analysis, we assume that $\lwf$ broadcasts $5$ pages at each time;
otherwise we can apply the same argument to maximal subintervals when $\lwf$ is fully busy, respectively. 
Let $E_{p,x} \in N$. We define
two intervals $I_{p,x} = [o'_{p,x}, e_{p,x}-1]$ and $I'_{p,x} =
[o'_{p,x}+\ceil{(e_{p,x} - o'_{p,x})/2}, e_{p,x}-1]$.  Since $\rho \le
e_{p,x} - o'_{p,x}$, it follows that $|I'_{p,x}| \ge
\frac{\len-1}{2\len} |I_{p,x}|$.  We wish to charge $\fe_{p,x}$ to
events (could be in $S$ or $N$) in the interval $I'_{p,x}$. By
Lemma~\ref{lem:close}, each event $E_{q,y}$ that finishes in
$I'_{p,x}$ satisfies the property that $F_{q,y} \ge
\fe_{p,x}/2$. Moreover, there are $5(\floor{e_{p,x} - o'_{p,x})/2}$
such events to charge to since $\lwf_5$ transmits $5$ pages in each
time slot.  Thus, locally for $E_{p,x}$ there are enough events to
charge to if $\len$ is a sufficiently large constant.  However, an
event $E_{q,y}$ with $e_{q,y} \in I'_{p,x}$ may also be charged by
many other events if we follow this simple local charging scheme. To
overcome this overcharging, we resort to a global charging scheme by
setting up a bipartite graph $G$ and invoking the fractional Hall's
theorem (see Lemma~\ref{lem:Hall_thm}) on this graph.

The bipartite graph $G=(X \cup Y, E)$ is defined as follows.  There is
exactly one vertex $u_{p,x} \in X$ for each non-self-chargeable event
$E_{p,x} \in N$ and there is exactly one vertex $v_{q,y} \in Y$ for
each event $E_{q,y} \in A$, where $A$ is the set of all events. Consider two vertices $u_{p,x} \in X$ and
$v_{q,y} \in Y$. There is an edge $u_{p,x}v_{q,y} \in E$ if and only
if $e_{q,y} \in I'_{p,x}$.  By Lemma~\ref{lem:close}, if there is an
edge between $u_{p,x} \in X$ and $v_{q,y} \in Y$ then $ F_{q,y} \ge
\fe_{p,x}/2$.

The goal is now to show that $G$ has a $\frac{2
  \len}{5(\len-1)}$-covering. Consider any non-empty set $Z \subseteq
X$ and a vertex $u_{p,x} \in Z$. Recall that the interval $I_{p,x}$
contains at least one broadcast by $\opt$ of page $p$. Let
$\mathcal{I} = \bigcup_{u_{p,x} \in Z } I_{p,x}$ be the union of the
time intervals corresponding to events in $Z$. Similarly, define
$\mathcal{I}' = \bigcup_{u_{p,x} \in Z } I'_{p,x}$.

We claim that $|Z| \leq |\mathcal{I}|$. This is because the optimal
solution has 1-speed and it has to do a separate broadcast for each
event in $Z$ during $\mathcal{I}$. Now consider the neighborhood of
$Z$, $N_G(Z)$.  We note that $|N_G(Z)| = 5 |\mathcal{I}'|$ since
$\lwf_5$ broadcasts $5$ pages for each time slot in $|\mathcal{I}'|$
and each such broadcast is adjacent to an event in $Z$ from the
definition of $G$.  From Lemma~\ref{lem:intervals}, $|\mathcal{I}'|
\ge \frac{\len-1}{2\len} |\mathcal{I}|$ as we had already observed
that $|I'_{p,x}| \ge \frac{\len-1}{2\len}|I_{p,x}|$ for each $E_{p,x}
\in N$. Thus we conclude that $|N_G(Z)| = 5 |\mathcal{I}'| \ge
5\frac{\len-1}{2\len} |\mathcal{I}| \ge 5\frac{\len-1}{2\len}
|Z|$. Since this holds for $\forall Z \subseteq X$, by
Lemma~\ref{lem:Hall_thm}, there must exist a $\frac{2 \len }{5(\len
  -1)}$-covering. Let $\ell$ be such a covering. Finally, we prove
that the covering implies the desired bound on $\lwfne_{5}$. \vspace{-2.5mm}

\begin{eqnarray*}
\lwfne_{5}\hspace{-1mm}
&=& \hspace{-0.6mm}\sum_{u_{p,x} \in X} \hspace{-1mm}\fe_{p,x} \mbox{ [By Definition]} \\
&=& \hspace{-2.8mm} \sum_{u_{p,x}v_{q,y} \in E} \hspace{-3.5mm} \ell_{u_{p,x},v_{q,y}} \fe_{p,x}  \mbox{ [By Def.~\ref{def:covering}, i.e. for \hspace{-0.5mm}$\forall u_{p,x} \hspace{-1mm}\in \hspace{-1mm}X$, $\sum_{v_{q,y \in Y}} \ell_{u_{p,x},v_{q,y}}\hspace{-0.9mm} = \hspace{-0.9mm}1$]}  \\
&\leq&  \hspace{-2.8mm} \sum_{u_{p,x}v_{q,y} \in E} \hspace{-3.5mm}\ell_{u_{p,x},v_{q,y}} 2 F_{q, y} \mbox{ [By Lemma~\ref{lem:close}]} \\
&\leq& \frac{4 \len}{5(\len-1)} \sum_{v_{q,y} \in Y} F_{q, y} \mbox{ [Change order of $\sum$ and $\ell$ is a $\frac{2
\len}{5(\len-1)}$-covering]} \\ \vspace{-1.5mm}
&\leq& \frac{4 \len}{5(\len-1)} \lwf_{5}. \mbox{ [Since $Y$ includes all events]} \\
\end{eqnarray*}

This finishes the proof of Lemma~\ref{lem:5speedNSC}.

\begin{remark}
  If non-integer speeds are allowed then the analysis in this
  subsection can be extended to show that $\lwf$ is ${4+ \eps}$-speed
  $O(1+1/\eps^2)$-competitive.
\vspace{-2.5mm}
\end{remark}

\subsection{Analysis of $4$-Speed}\vspace{-2mm}
\label{sec:4speed}
Due to insufficient space, we only sketch the key idea.  We remind the
reader that early requests of each non-self-chargeable event $E_{p,x}$
were charged to only half the events that ended on $[o'_{p,x},
e_{p,x}-1]$.  Thus, fully utilizing all the events, which end during
$[o'_{p,x}, e_{p,x}-1]$, can improve the speed.
Lemma~\ref{lem:close}, however, does not provide a good comparison
between $\fe_{p,x}$ and flow time of event $E_{r,z}$ which is done
close to $o'_{p,x}$. We overcome this by further refining the class of
non self-chargeable events into $\one$ and $\two$. For an event
$E_{p,x}$ in the interesting class $\two$, we are able to show that
all events in $[o'_{p,x}, e_{p,x}-1]$ have comparable flow-time to
that of $E_{p,x}$. This allows us to effectively charge $E_{p,x}$ to
events done at $o_{p,x}$; note that for any two events $E_{p,x}$ and
$E_{q,y}$ in $N$, $o_{p,x} \neq o_{q,y}$. The proof is technical and
requires several parameters; details can be found in
Appendix~\ref{sec:4speed}.

\vspace{-3mm}

\section{Generalization to Delay-Factor and  $L_k$ Norms}\vspace{-1mm}
\label{sec:metrics} \vspace{-2mm} In this section, our proof
techniques are extended to show that a generalization of $\lwf$ is
$O(1)$-speed $O(1)$-competitive for minimizing the average
delay-factor and minimizing the $L_k$-norm of the delay-factor.
Recall that flow-time can be subsumed as a special case of
delay-factor. Thus, these results will also apply to $L_k$ norms of
flow-time. Instead of focusing on specific objective functions, we
develop a general framework and derive results for delay-factor and
$L_k$ norms as special cases. First, we set up some notation. We
assume that for each request $J_{p,i}$ there is a non-decreasing
function $m_{p,i}(t)$ that gives the cost/penalty of that $J_{p,i}$
accumulates if it has {\em waited} for a time of $t$ units after its
arrival. Thus the total cost/penalty incurred for a schedule that
finishes $J_{p,i}$ at $f_{p,i}$ is $m_{p,i}(f_{p,i} -a_{p,i})$. For
flow-time $m_{p,i}(t) = t$ while for delay-factor it is $\max (1,
\frac{t-a_{p,i}}{d_{p,i} - a_{p,i}})$. For $L_k$ norms of delay-factor
we set $m_{p,i}(t) = \max (1, \frac{t-a_{p,i}}{d_{p,i} - a_{p,i}})^k$.
Note that the $L_k$ norm of delay-factor for a given sequence of
requests is $\sqrt[k]{\sum_{p,i} m_{p,i}(f_{p,i}-a_{p,i})}$ but we can
ignore the outer $k$'th root by focusing on the inner sum.

A natural generalization of $\lwf$ to more general metrics is described below; we refer to this (greedy) algorithm as
$\grdy$ for Longest First. We in fact describe $\grdy_s$ which is given $s$ speed over the adversary. \vspace{-2mm}
\begin{center}
\begin{tabular}[r]{|c|}
\hline
\textbf{Algorithm}: $\grdy_s$ \\
\begin{minipage}{\textwidth}
\begin{itemize}
\item At any integer time $t$, broadcast the $s$ pages with the
largest $m$-waiting times where the $m$-waiting time of page $p$ at $t$
 is  $\sum_{J_{p,i} \in U(t)} m_{p,i}(t -a_{p,i})$.
\end{itemize}
\end{minipage}\\
\hline
\end{tabular}
\end{center}

\begin{remark}
  The algorithm and analysis do not assume that the functions $m_{p,i}$
  are ``uniform'' over requests.  In principle each request $J_{p,i}$
  could have a different penalty function.
\end{remark}
In order to analyze $\grdy$, we need a lower bound on the ``growth''
rate of the functions $m_{p,i}()$. In particular we assume that there
is a function $h: [0,1] \rightarrow \mathbb{R}^+$ such that
$m_{p,i}(\lambda t) \ge h(\lambda) m_{p,i}(t)$ for all $\lambda \in
[0,1]$. It is not to difficult to see that for flow-time and delay-factor
we can choose $h(\lambda) = \lambda$, and for $L_k$ norms of
flow-time and delay-factor, we can set $h(\lambda) = \lambda^k$.
We also define a function $m:  \mathbb{R}^+ \rightarrow \mathbb{R}^+$
as $m(x) = \max_{(p,i)} m_{p,i}(x)$. The rest of the analysis depends
only on $h$ and $m$.

In the following subsection we outline a generalization of the
analysis from Section~\ref{sec:5speed} that applies to $\grdy_s$; the
analysis bounds various quantities in terms of the functions $h()$ and
$m()$.  In Section~\ref{sec:metric-results}, we derive the results for
minimizing delay-factor and $L_k$ norms of delay-factor.
\vspace{-2.5mm}

\subsection{Outline of Analysis} \vspace{-2mm}
\label{sec:gen-framework} To bound the competitiveness of $\grdy_s$, we use the same techniques we used for bounding
the competitiveness of $\lwf_s$. Events are again defined in the same fashion; $E_{p,x}$ is the event defined by the
$x$'th transmission of $p$ by $\grdy_s$. We again partition events into self-chargeable and non self-chargeable events
and charge self-chargeable events to the optimal value and charge non-self-chargeable events to $\delta \grdy_s +
m(\len) \optn$ for some $\delta < 1$. For an event $E_{p,x}$, let $M_{p,x}(t) = \sum_{J_{p,i} \in \mathcal{J}_{p,x}}
m_{p,i}(t-a_{p,i})$ denote the total $m$-cost of all requests for $p$ that arrive in $[b_{p,x}, e_{p,x}-1]$ that are
satisfied at $e_{p,x}$. We let $M^*_{p,x}(t)$ be the $m$-cost of the same set of requests for the optimal solution. An
event $E_{p,x}$ is self-chargeable if $M_{p,x} \le m(\len) M^*_{p,x}$ or $e_{p,x} - b_{p,x} \le \len$ for some constant
$\len$ to be optimized later. The remaining events are non self-chargeable.  Again, requests for non-self-chargeable
events are divided into early requests and late requests based on whether they arrive before $o'_{p,x}$ or not where
$o'_{p,x} = \min\{o_{p,x}, e_{p,x}-\len\}$.  Let $M^e_{p,x}$ and $M^l_{p,x}$ be the flow time accumulated for early and
late requests of a non-self-chargeable event $E_{p,x}$, respectively.  The values of $\grdy_s^{N}$, $\grdy_s^{N^l}$,
$\grdy_s^{N^e}$, and $\grdy_s^{S}$ are defined in the same way as $\lwf_s^{N}$, $\lwf_s^{N^l}$, $\lwf_s^{N^e}$, and
$\lwf_s^{S}$. Likewise for $\opt$.  The following two lemmas are analogues of Lemmas~\ref{lem:SC} and \ref{lem:NSCl}
and follow from definitions.

\begin{lemma} \label{lem:GSC}
  $\grdy_s^{S} \leq m(\len) \opt^{S}$.
\end{lemma}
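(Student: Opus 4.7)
The plan is to mirror the proof of Lemma~\ref{lem:SC}, splitting self-chargeable events $E_{p,x} \in S$ according to the disjunctive definition: either (i) $M_{p,x} \le m(\len) M^*_{p,x}$ holds directly, or (ii) $e_{p,x} - b_{p,x} \le \len$. In case (i) the per-event bound $M_{p,x} \le m(\len) M^*_{p,x}$ is immediate, so the only work is to establish the same per-event inequality in case (ii) from the fact that the event is short.

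For (ii), fix any $J_{p,i} \in \cJ_{p,x}$. Under $\grdy_s$ its wait $e_{p,x} - a_{p,i}$ is at most the event length, hence at most $\len$; by monotonicity of $m_{p,i}$ and the definition $m(x) = \max_{(p,i)} m_{p,i}(x)$, this gives $m_{p,i}(e_{p,x} - a_{p,i}) \le m_{p,i}(\len) \le m(\len)$. On the other side, the discrete-time ordering (new arrivals at $t$ are not served by the broadcast at $t$) forces $\opt$ to wait at least one step after $a_{p,i}$, so $\opt$'s $m$-cost on $J_{p,i}$ is at least $m_{p,i}(1)$. For every penalty considered in the paper (flow-time, delay-factor, and their $L_k$ powers) one has $m_{p,i}(1) \ge 1$, and therefore $m_{p,i}(\grdy_s\text{-wait}) \le m(\len) \cdot m_{p,i}(\opt\text{-wait})$. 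Summing over $J_{p,i} \in \cJ_{p,x}$ yields $M_{p,x} \le m(\len) M^*_{p,x}$.

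Combining the two cases and summing over all $E_{p,x} \in S$ produces $\grdy_s^S \le m(\len) \opt^S$. There is no real technical obstacle; the only delicate point is the normalization $m_{p,i}(1) \ge 1$, which is transparent for each metric treated here and, if desired in greater generality, can be enforced by replacing $m_{p,i}$ with $\max(m_{p,i},1)$ on inputs $\ge 1$ without affecting any other part of the framework.
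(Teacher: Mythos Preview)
Your proof is correct and mirrors exactly the argument the paper has in mind (the paper states that the lemma ``follows easily from the definitions,'' and the commented-out proof of the analogous Lemma~\ref{lem:SC} for $\lwf$ handles the two cases in precisely the same way you do). Your observation about the tacit normalization $m_{p,i}(1)\ge 1$ is on point: the paper never states it as a standing assumption in the general framework, but it holds for flow-time, delay-factor, and their $L_k$ powers, which are the only instantiations used.
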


\begin{lemma}
  \label{lem:GNSCl}
  $\grdy^{N^l}_s \leq m(\len) \opt^{N}$.
\end{lemma}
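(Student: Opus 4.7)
The plan is to adapt the proof of Lemma~\ref{lem:NSCl} essentially verbatim to the generalized penalty functions, relying only on monotonicity of each $m_{p,i}$ together with the mild property $m_{p,i}(1)\ge 1$ that holds for flow-time, delay-factor, and their $L_k$-variants. I fix a non-self-chargeable event $E_{p,x}$ and show that for every late request $J_{p,i}\in \cJ_{p,x}$, the $m$-cost paid by $\grdy_s$ is at most $m(\len)$ times the $m$-cost paid by $\opt$ on the same request; summing this per-request bound over all late requests in all events $E_{p,x}\in N$ then yields the lemma.

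Since $o'_{p,x}=\min\{o_{p,x},e_{p,x}-\len\}$, there are two cases. In Case 1, $e_{p,x}-o_{p,x}\ge\len$ and $o'_{p,x}=o_{p,x}$. Because $o_{p,x}$ is, by definition, the last broadcast of $p$ by $\opt$ inside $[b_{p,x}+1,e_{p,x}-1]$, the next broadcast of $p$ by $\opt$ occurs at some time $\ge e_{p,x}$; in the discrete-time model a request arriving at $a_{p,i}\ge o_{p,x}$ cannot be served by the broadcast at $o_{p,x}$, so $\opt$'s finish time is at least $e_{p,x}$, which equals $\grdy_s$'s finish time. Hence $\opt$'s wait is at least $\grdy_s$'s wait, and monotonicity of $m_{p,i}$ together with $m(\len)\ge 1$ yields the desired per-request bound.

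In Case 2, $e_{p,x}-o_{p,x}<\len$ and $o'_{p,x}=e_{p,x}-\len$. A late request now arrives in $[e_{p,x}-\len,e_{p,x}-1]$, so its wait under $\grdy_s$ is at most $\len$ and $m_{p,i}(f_{p,i}-a_{p,i})\le m_{p,i}(\len)\le m(\len)$. On the other hand, in the discrete-time model $\opt$ cannot satisfy a request in the same slot it arrives, so $\opt$'s wait for it is at least $1$, and the property $m_{p,i}(1)\ge 1$ gives $m_{p,i}(f^{*}_{p,i}-a_{p,i})\ge 1$. Multiplying the two inequalities yields $m_{p,i}(f_{p,i}-a_{p,i})\le m(\len)\, m_{p,i}(f^{*}_{p,i}-a_{p,i})$.

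The only delicate step, and the main thing that needs to be pinned down, is the normalization $m_{p,i}(1)\ge 1$ used in Case 2: it is trivial for flow-time where $m_{p,i}(t)=t$, and it holds for delay-factor and its $L_k$-variants since $\max(1,\cdot)^k\ge 1$. Given that, summing the per-request inequality over all late requests of all non-self-chargeable events gives $\grdy^{N^l}_s\le m(\len)\,\opt^{N}$, as claimed.
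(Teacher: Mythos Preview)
Your argument is correct and mirrors the paper's approach exactly: the paper states that Lemmas~\ref{lem:GSC} and \ref{lem:GNSCl} ``follow from definitions'' as direct analogues of Lemmas~\ref{lem:SC} and \ref{lem:NSCl}, and your two-case analysis is precisely the generalized version of the (omitted) proof of Lemma~\ref{lem:NSCl}. Your explicit identification of the normalization $m_{p,i}(1)\ge 1$ needed in Case~2 is a useful observation---it is not part of the abstract framework's hypotheses on $h$ and $m$, but it holds for all the concrete metrics the paper treats (flow-time, delay-factor, and their $L_k$ powers), which is all that is needed.
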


We now show a generalization of Lemma~\ref{lem:close} that states that
any event $E_{q,y}$ such that $e_{q,y}$ is close to $e_{p,x}$ has
$m$-waiting time comparable to the $m$-waiting time of early requests of
$E_{p,x}$.

\begin{lemma}
  \label{lem:rq_close} Suppose $E_{p,x}$ and $E_{q,y}$ are two events
  such that $e_{q,y} \in [\ceil{o'_{p,x} + \lambda(e_{p,x} -
    o'_{p,x})}, e_{p,x}-1]$, $M_{q,y} \geq h(\lambda)M^e_{p,x}$.
\end{lemma}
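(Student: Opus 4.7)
The plan is to mirror the proof of Lemma~\ref{lem:close}, using the growth property $m_{p,i}(\lambda t) \ge h(\lambda) m_{p,i}(t)$ in the one place where the linearity of the flow-time metric was used. Define, for $t \in [b_{p,x}, e_{p,x}]$, the partial $m$-cost $M_{p,x}(t) = \sum_{J_{p,i} \in \mathcal{J}_{p,x},\, a_{p,i} \le t} m_{p,i}(t - a_{p,i})$, and split it as $M_{p,x}(t) = M^e_{p,x}(t) + M^l_{p,x}(t)$ according to whether $a_{p,i} < o'_{p,x}$ (early) or not. Note that $M^e_{p,x}(e_{p,x}) = M^e_{p,x}$.

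First I would handle the early part. Fix $t' \ge \lceil o'_{p,x} + \lambda(e_{p,x}-o'_{p,x}) \rceil$ and an early request $J_{p,i}$ with $a_{p,i} < o'_{p,x}$. Then
\[
t' - a_{p,i} \;=\; (t' - o'_{p,x}) + (o'_{p,x} - a_{p,i}) \;\ge\; \lambda(e_{p,x}-o'_{p,x}) + (o'_{p,x}-a_{p,i}) \;\ge\; \lambda(e_{p,x}-a_{p,i}),
\]
since $(1-\lambda)(o'_{p,x}-a_{p,i}) \ge 0$. Using monotonicity of $m_{p,i}$ and the growth hypothesis,
\[
m_{p,i}(t' - a_{p,i}) \;\ge\; m_{p,i}\bigl(\lambda(e_{p,x}-a_{p,i})\bigr) \;\ge\; h(\lambda)\, m_{p,i}(e_{p,x}-a_{p,i}).
\]
Summing over early requests yields $M^e_{p,x}(t') \ge h(\lambda) M^e_{p,x}$.

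Second, I would invoke the greedy choice of $\grdy_s$. At time $e_{q,y}$, page $p$ is in the queue (since $b_{p,x} < e_{q,y} < e_{p,x}$ and $p$ is not broadcast strictly between $b_{p,x}$ and $e_{p,x}$), but $\grdy_s$ chose $q$ among the $s$ pages with largest $m$-waiting time and did not choose $p$. Hence
\[
M_{q,y} \;\ge\; M_{p,x}(e_{q,y}) \;\ge\; M^e_{p,x}(e_{q,y}) \;\ge\; h(\lambda)\, M^e_{p,x},
\]
where the last inequality uses the previous paragraph with $t' = e_{q,y}$. This gives the claim.

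The only real content is the estimate $t' - a_{p,i} \ge \lambda(e_{p,x}-a_{p,i})$ followed by applying the growth property; every other step is a direct transcription from the proof of Lemma~\ref{lem:close}. No obstacle beyond being careful that $M_{p,x}(t)$ is defined so that the greedy comparison $M_{q,y} \ge M_{p,x}(e_{q,y})$ is meaningful, i.e.\ it really captures the current $m$-waiting time of $p$ that $\grdy_s$ sees.
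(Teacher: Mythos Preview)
Your proposal is correct and follows essentially the same approach as the paper's proof: you establish $t'-a_{p,i}\ge\lambda(e_{p,x}-a_{p,i})$ for early requests, apply monotonicity together with the growth hypothesis $m_{p,i}(\lambda t)\ge h(\lambda)m_{p,i}(t)$, sum, and then use the greedy choice of $\grdy_s$ at time $e_{q,y}$. The only difference is cosmetic: you spell out the inequality $t'-a_{p,i}\ge\lambda(e_{p,x}-a_{p,i})$ via the decomposition $(t'-o'_{p,x})+(o'_{p,x}-a_{p,i})$ and explicitly invoke monotonicity before the growth bound, whereas the paper states these in one line.
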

\begin{proof}[Sketch]
  Consider an early request $J_{p,i}$ in $\cJ_{p,x}$ and let $t \in
  [\ceil{o'_{p,x} + \lambda(e_{p,x} - o'_{p,x})}, e_{p,x}-1]$.  Since
  $a_{p,i} \le o'_{p,x}$, it follows that $t \ge \lambda (e_{p,x} -
  a_{p,i}) + a_{p,i}$. Hence, $m_{p,i}(t-a_{p,i}) \ge h(\lambda)m_{p,i}(e_{p,x}
  - a_{p,i})$. Summing over all early requests, it follows that
  $M^e_{p,x}(t) \ge h(\lambda) M^e_{p,x}$. Since $\grdy_s$ chose
  to transmit $q$ at $t = e_{q,y}$ instead of $p$, it follows that
  $M_{q,y} \ge M_{p,x}(e_{q,y}) \ge M^e_{p,x}(e_{q,y}) \ge h(\lambda)M^e_{p,x}$.
\end{proof}

As in Section~\ref{sec:5speed}, the key ingredient of the analysis is
to bound the waiting time of early requests. We state the analogue of
Lemma~\ref{lem:5speedNSC} below. Observe that we have an additional
parameter $\beta$. In Lemma~\ref{lem:5speedNSC} we hard wire $\beta$ to
be $1/2$ to simplify the exposition. In the more general setting, the
parameter $\beta$ needs to be tuned based on $h$.

\begin{lemma}
    \label{lem:GNSC}
    For any $0<\beta<1$, $\grdy_s^{N^e} \leq \frac{ \len
    }{s h(\beta)(\len(1-\beta)  -1)} \grdy_{s} $, where $h$ is some
    scaling function for $m$.
\end{lemma}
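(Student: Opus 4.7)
The plan is to mirror the $5$-speed analysis of \lemref{5speedNSC} step-for-step, replacing the hard-wired parameter $1/2$ by the free parameter $\beta$, the speed $5$ by $s$, and the factor $1/2$ that appeared via \lemref{close} by the factor $h(\beta)$ supplied by \lemref{rq_close}. As in the original proof, I assume without loss of generality that $\grdy_s$ is fully busy at every time step; otherwise the argument is applied to each maximal busy subinterval separately. For each non-self-chargeable event $E_{p,x} \in N$, define
\[
I_{p,x} = [o'_{p,x}, e_{p,x}-1] \quad\text{and}\quad I'_{p,x} = \bigl[o'_{p,x} + \ceil{\beta(e_{p,x} - o'_{p,x})},\, e_{p,x}-1\bigr].
\]
Since $|I_{p,x}| = e_{p,x} - o'_{p,x} \ge \len$ by the definition of $o'_{p,x}$, a short calculation yields $|I'_{p,x}| \ge (1-\beta)|I_{p,x}| - 1 \ge \frac{\len(1-\beta) - 1}{\len} |I_{p,x}|$.

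Next, I would build the bipartite graph $G = (X \cup Y, E)$ exactly as in \secref{5speed}: $X$ contains one vertex $u_{p,x}$ for each non-self-chargeable event, $Y$ contains one vertex $v_{q,y}$ for every event produced by $\grdy_s$, and $u_{p,x} v_{q,y} \in E$ iff $e_{q,y} \in I'_{p,x}$. By \lemref{rq_close}, every such edge satisfies $M_{q,y} \ge h(\beta) M^e_{p,x}$. To verify the fractional Hall condition, fix $Z \subseteq X$ and set $\mathcal{I} = \bigcup_{u_{p,x} \in Z} I_{p,x}$ and $\mathcal{I}' = \bigcup_{u_{p,x} \in Z} I'_{p,x}$. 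Since $\opt$ has unit speed and must broadcast $p$ at least once inside $I_{p,x}$ for every non-self-chargeable event, $|Z| \le |\mathcal{I}|$; applying \lemref{intervals} with $\lambda = \frac{\len(1-\beta)-1}{\len}$ gives $|\mathcal{I}'| \ge \lambda |\mathcal{I}|$; and since $\grdy_s$ broadcasts $s$ pages per slot with each slot of $\mathcal{I}'$ contributing a $Y$-neighbor of some $u_{p,x} \in Z$, I obtain $|N_G(Z)| = s|\mathcal{I}'| \ge s\lambda |Z|$. By \lemref{Hall_thm} there exists a $c$-covering $\ell$ with $c = \frac{1}{s\lambda} = \frac{\len}{s(\len(1-\beta)-1)}$.

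Finally, the same telescoping as in \secref{5speed} converts the covering into the claimed bound:
\begin{align*}
\grdy_s^{N^e} &= \sum_{u_{p,x} \in X} M^e_{p,x} = \sum_{u_{p,x}v_{q,y} \in E} \ell_{u_{p,x},v_{q,y}}\, M^e_{p,x} \\
&\le \frac{1}{h(\beta)} \sum_{u_{p,x}v_{q,y} \in E} \ell_{u_{p,x},v_{q,y}}\, M_{q,y} \;\le\; \frac{c}{h(\beta)} \grdy_s,
\end{align*}
where the second equality uses $\sum_{v_{q,y}} \ell_{u_{p,x},v_{q,y}} = 1$ for each $u_{p,x}$, the first inequality applies \lemref{rq_close} edge-by-edge, and the last swaps summation orders and invokes the $c$-covering property. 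Substituting the value of $c$ yields the claim. I do not anticipate a substantive obstacle: the two places where the parameters enter (the interval shrinkage feeding into Hall's condition, and the edge-weight factor produced by \lemref{rq_close}) compose cleanly, and the self-chargeable and late-request bookkeeping is already handled by \lemref{GSC} and \lemref{GNSCl}. The only delicate step is absorbing the $-1$ from $\ceil{\beta L}$ via $|I_{p,x}| \ge \len$, which is precisely what produces $\len(1-\beta)-1$ rather than $\len(1-\beta)$ in the denominator; the statement tacitly requires $\len(1-\beta) > 1$ so that $c$ is finite, and the joint optimization of $\len$ and $\beta$ for specific $m$ and $h$ is left to \secref{metric-results}.
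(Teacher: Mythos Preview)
Your proposal is correct and follows essentially the same route as the paper: define the shrunken interval $I'_{p,x}$ with parameter $\beta$, bound $|I'_{p,x}| \ge \frac{\len(1-\beta)-1}{\len}|I_{p,x}|$ via $|I_{p,x}| \ge \len$, set up the same bipartite graph, apply \lemref{intervals} and \lemref{Hall_thm} to obtain a $\frac{\len}{s(\len(1-\beta)-1)}$-covering, and then telescope using \lemref{rq_close} to pick up the extra $1/h(\beta)$ factor. The paper's own sketch after the lemma statement outlines exactly these steps, so there is nothing materially different here.
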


The proof of the above lemma follows essentially the same lines as
that of Lemma~\ref{lem:5speedNSC}. The idea is to charge $M^e_{p,x}$
to events in the interval $[o'_{p,x} + \ceil{\beta(e_{p,x} -
  o'_{p,x})}, e_{p,x}-1]$. Using Lemma~\ref{lem:rq_close}, each
event in this interval is within a factor of $h(\lambda)$ of $M^e_{p,x}$.  The length of this interval is at least
$\frac{\len(1-\beta) -1}{\rho}$ times the length of the interval $[o'_{p,x}, e_{p,x}-1]$. To avoid overcharging we
again resort to the global scheme using fractional Hall's theorem after we setup the bipartite graph. We can then prove
the existence of a $\frac{ \len }{s (\len(1-\beta) -1)}$-covering and since each event can pay to within a factor of
$h(\beta)$, the lemma follows.

Putting the above lemmas together we derive the following theorem.

\begin{theorem}
    \label{thm:G}
    Let $\beta \in (0,1)$ and $\len > 1$ be given constants.  If $s$
    is an integer such that $\frac{ \len }{sh(\beta)(\len(1-\beta) -1)} \le \delta < 1$, then algorithm $\grdy_s$ is $s$-speed
    $\frac{m(\len)}{1-\delta}$-competitive.
\vspace{-2.5mm}
\end{theorem}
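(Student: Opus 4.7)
The plan is to assemble the three structural lemmas already proved for the generalized algorithm (Lemmas~\ref{lem:GSC}, \ref{lem:GNSCl}, and \ref{lem:GNSC}) in exactly the same telescoping style used in Theorem~\ref{thm:5spd}. The key identity we will exploit is that every request of $\grdy_s$ falls into exactly one of three classes: it belongs to a self-chargeable event, it is a late request of a non-self-chargeable event, or it is an early request of a non-self-chargeable event. Hence we can write
\[
\grdy_s \;=\; \grdy_s^{S} \;+\; \grdy_s^{N^l} \;+\; \grdy_s^{N^e}.
\]

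The first step is to bound the two easy pieces directly against $\opt$. Lemma~\ref{lem:GSC} gives $\grdy_s^{S} \le m(\len)\,\opt^{S}$, and Lemma~\ref{lem:GNSCl} gives $\grdy_s^{N^l} \le m(\len)\,\opt^{N}$. Adding these two bounds and using $\opt^S + \opt^N \le \opt$ yields $\grdy_s^{S} + \grdy_s^{N^l} \le m(\len)\,\opt$, absorbing all the ``chargeable-to-$\opt$'' mass into a single term.

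The second step is to handle the non-self-chargeable early requests by the self-bound of Lemma~\ref{lem:GNSC}, which states $\grdy_s^{N^e} \le \frac{\len}{sh(\beta)(\len(1-\beta)-1)}\,\grdy_s$. Under the hypothesis of the theorem, the coefficient on the right-hand side is at most $\delta < 1$, so $\grdy_s^{N^e} \le \delta\,\grdy_s$. This is the crucial place where the hypothesis of the theorem enters: we need $\delta$ strictly less than $1$ so that the self-reference can be rearranged, and we need $s$ large enough (given $\beta$ and $\len$) to make this possible. The only subtle thing to check is that the three lemmas have been stated with compatible definitions of self-chargeable/early/late events with the same $\len$ and same $o'_{p,x} = \min\{o_{p,x}, e_{p,x}-\len\}$; a quick inspection of Section~\ref{sec:gen-framework} confirms this is the case.

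Combining the two steps, we obtain $\grdy_s \le m(\len)\,\opt + \delta\,\grdy_s$, and rearranging yields $\grdy_s \le \frac{m(\len)}{1-\delta}\,\opt$, which is exactly the claim. There is no real obstacle here — the theorem is a clean bookkeeping corollary of the three lemmas — and the only thing worth emphasizing in the writeup is that the choice of $s$ in terms of $\beta$ and $\len$ is precisely what makes the self-reference contractive. In applications (as in Section~\ref{sec:metric-results}) one then picks $\beta$, $\len$, and $s$ to optimize the resulting competitive ratio for the specific metric at hand via the concrete forms of $h$ and $m$.
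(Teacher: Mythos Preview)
Your proposal is correct and follows exactly the approach the paper intends: the paper does not even spell out a proof of Theorem~\ref{thm:G} beyond the sentence ``Putting the above lemmas together we derive the following theorem,'' and your decomposition $\grdy_s = \grdy_s^{S} + \grdy_s^{N^l} + \grdy_s^{N^e}$ together with Lemmas~\ref{lem:GSC}, \ref{lem:GNSCl}, \ref{lem:GNSC} and the rearrangement is precisely the calculation of Theorem~\ref{thm:5spd} transplanted to the general setting.
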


\subsection{Results for Delay-Factor and $L_k$ Norms}
\label{sec:metric-results} \vspace{-0.5mm}

We can apply Theorem~\ref{thm:G} with appropriate choice of parameters
to show that $\grdy_s$ is $O(1)$-competitive with $O(1)$ speed.

For minimizing average delay-factor we have $h(\lambda) = \lambda$ and
$m(x) \le x$. For this reason, average delay-factor behaves essentially
the same as average flow-time and we can carry over the results
from flow-time.

\begin{theorem}
  \label{thm:df_34spd}
  The algorithm $\grdy$ is $5$-speed $O(1)$
  competitive for minimizing the average delay-factor. For non-integer speeds
  it is $4+\eps$-speed $O(1/\eps^2)$-competitive.
\end{theorem}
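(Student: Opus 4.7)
The plan is to invoke Theorem~\ref{thm:G} as a black box, with the specific penalty functions for delay-factor substituted in, and then tune the parameters $s$, $\beta$, $\len$ to obtain the two claimed bounds. So the work reduces to identifying appropriate functions $h$ and $m$ for delay-factor and checking that they satisfy the hypotheses of the framework.

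First I would verify the scaling inequality: for the delay-factor cost $m_{p,i}(t) = \max(1, t/(d_{p,i} - a_{p,i}))$, a straightforward two-case check (whether $t$ exceeds the slack $d_{p,i} - a_{p,i}$ or not) shows that $m_{p,i}(\lambda t) \ge \lambda m_{p,i}(t)$ for every $\lambda \in [0,1]$, so $h(\lambda) = \lambda$ is a valid scaling function, exactly as for flow-time. Next I would bound the function $m(x) = \max_{(p,i)} m_{p,i}(x)$: in the discrete time model every slack $d_{p,i} - a_{p,i}$ is a positive integer and hence at least $1$, which gives $m_{p,i}(x) \le \max(1, x) \le x$ for all $x \ge 1$, so $m(\len) \le \len$ whenever $\len \ge 1$. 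Thus the self-chargeable bound $\grdy_s^S \le m(\len)\opt^S$ and the late-request bound $\grdy_s^{N^l} \le m(\len)\opt^N$ both carry over with the same linear constant $\len$ as in the flow-time case.

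With $h$ and $m$ in hand, the two claims follow by plugging numbers into Theorem~\ref{thm:G}. For the $s=5$ bound I would pick $\beta = 1/2$, so that $h(\beta)(\len(1-\beta)-1) = \tfrac{\len-2}{4}$ and the required $\delta$ in the theorem becomes $\tfrac{4\len}{5(\len-2)}$; any constant $\len$ sufficiently larger than $10$ (for instance $\len = 20$) makes $\delta$ a constant strictly below $1$, and then $m(\len)/(1-\delta)$ is an absolute constant, giving the $O(1)$-competitive claim. For the $(4+\eps)$-speed result I would still take $\beta = 1/2$, giving $\delta = \tfrac{4\len}{(4+\eps)(\len-2)}$, and then choose $\len = \Theta(1/\eps)$ (say $\len = 16/\eps$) so that $1-\delta = \Theta(\eps)$; combined with $m(\len) = O(1/\eps)$ this yields the advertised $O(1/\eps^2)$-competitive ratio.

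The ``hard part'' here is essentially conceptual rather than technical: because Theorem~\ref{thm:G} has already absorbed the entire fractional-Hall charging apparatus, the entire content of the proof is verifying the scaling property of $m_{p,i}$ and bounding the global envelope $m(\len)$ linearly in $\len$. The only real subtlety worth flagging is the $\max(1,\cdot)$ in the delay-factor definition, which forces the case split for $h$ and makes the slack-$\ge 1$ hypothesis genuinely necessary; once those two points are handled, the theorem follows by arithmetic selection of $\beta$ and $\len$.
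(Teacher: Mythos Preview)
Your proposal is correct and takes essentially the same approach as the paper: the paper simply notes that for average delay-factor one has $h(\lambda)=\lambda$ and $m(x)\le x$, exactly as for flow-time, and then states that the flow-time results carry over; you spell out the verification of $h$ and $m$ and the parameter choices in Theorem~\ref{thm:G} more explicitly, but the argument is the same. One small caveat worth noting is that Theorem~\ref{thm:G} as stated assumes integer $s$, so the $(4+\eps)$-speed claim strictly requires the (routine) extension to non-integer speeds that the paper also leaves as a remark rather than proving in full.
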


The analysis in Section~\ref{sec:4speed} also extends to delay-factor although it does not fall in the general
framework that we outlined in Section~\ref{sec:gen-framework}. Thus $\grdy$ is $(3.4+\eps)$-speed
$O(1/\eps^3)$-competitive for average delay-factor.

For $L_k$ norms of delay-factor we have $h(\lambda) = \lambda^k$ and $m(x) \le x^k$. By choosing $\beta =
\frac{k}{k+1}$, $\rho = 90(k+1)$ and $s = 3(k+1)$ in Theorem~\ref{thm:G}, we can show that
the algorithm $\grdy$ is $3(k+1)$-speed $O(\rho^k)$-competitive for minimizing $\sum_{p,i} m_{p,i}(f_{p,i}-a_{p,i})$.
Thus for minimizing the $L^k$-norm delay factor, we obtain  $\sqrt[k]{O(\rho^k)} = O(\rho)$ competitiveness, which shows the following.

\begin{theorem}
    \label{thm:LP}
    For $k \ge 1$, the algorithm $\grdy$ is $O(k)$-speed
    $O(k)$-competitive for minimizing $L_k$-norm of delay-factor.
\end{theorem}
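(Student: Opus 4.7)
The plan is to specialize the general framework of Theorem~\ref{thm:G} to the $L_k$-norm of delay-factor by plugging in the right scaling function and penalty function, and then to pick the parameters $\beta$, $\rho$, $s$ as functions of $k$ so that the competitiveness condition $\frac{\len}{s h(\beta)(\len(1-\beta)-1)} \le \delta < 1$ is satisfied with $\delta$ bounded away from $1$ by a constant independent of $k$.

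First I would instantiate the functions. For the $L_k$-norm of delay-factor, each request carries the penalty $m_{p,i}(t) = \max(1, t/(d_{p,i}-a_{p,i}))^k$, so $m(x) \le x^k$, and a direct calculation shows that $m_{p,i}(\lambda t) \ge \lambda^k m_{p,i}(t)$ for $\lambda \in [0,1]$, so we may take $h(\lambda) = \lambda^k$. With these choices Theorem~\ref{thm:G} guarantees $s$-speed $O(\rho^k)$-competitiveness for the inner sum $\sum_{p,i} m_{p,i}(f_{p,i} - a_{p,i})$ whenever the parameters are admissible; taking $k$-th roots then yields the bound for the $L_k$-norm itself.

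Next I would set $\beta = k/(k+1)$, $\rho = 90(k+1)$, and $s = 3(k+1)$. With these choices, $\rho(1-\beta) - 1 = 90 - 1 = 89$, which is a constant independent of $k$, and $h(\beta) = \bigl(k/(k+1)\bigr)^k = 1/(1+1/k)^k$, which is monotonically decreasing in $k$ and bounded below by $1/e$ (and equals $1/2$ at $k=1$). Plugging in,
\[
\frac{\len}{s h(\beta)(\len(1-\beta)-1)} \;=\; \frac{90(k+1)}{3(k+1) \cdot (k/(k+1))^k \cdot 89} \;=\; \frac{30}{89\,(k/(k+1))^k} \;\le\; \frac{30\,e}{89} \;<\; 1,
\]
so we may pick a constant $\delta < 1$, independent of $k$, that satisfies the hypothesis of Theorem~\ref{thm:G}. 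The resulting competitive ratio for the inner sum is $m(\rho)/(1-\delta) \le \rho^k/(1-\delta) = O\bigl((90(k+1))^k\bigr) = O(\rho^k)$.

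Finally, since the $L_k$-norm objective is the $k$-th root of this inner sum, taking $k$-th roots gives $L_k$-norm competitiveness of $O(\rho) = O(k)$, with speed $s = 3(k+1) = O(k)$, which is the claimed bound. The only nontrivial step is verifying that $h(\beta)$ is bounded away from $0$ uniformly in $k$; that is what motivates the choice $\beta = k/(k+1)$, since the classical limit $(k/(k+1))^k \to 1/e$ provides exactly the uniform lower bound needed to keep $\delta$ away from $1$ without inflating the speed beyond $O(k)$.
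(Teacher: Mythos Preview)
Your proposal is correct and follows exactly the same approach as the paper: you instantiate $h(\lambda)=\lambda^k$ and $m(x)\le x^k$, choose the identical parameters $\beta=k/(k+1)$, $\rho=90(k+1)$, $s=3(k+1)$, apply Theorem~\ref{thm:G}, and take the $k$-th root. The only difference is that you spell out the verification that $\frac{\len}{s h(\beta)(\len(1-\beta)-1)} = \frac{30}{89(k/(k+1))^k} \le \frac{30e}{89} < 1$, which the paper leaves implicit.
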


\vspace{-7mm}

\section{Conclusion}\vspace{-3mm}
\label{sec:concl}
We gave a simpler analysis of $\lwf$ for minimizing average flow-time
in broadcast scheduling. This not only helps improve the speed bound
but also results in extending the algorithm and analysis to more
general objective functions such a delay-factor and $L_k$ norms of
delay-factor. We hope that our analysis is useful in other scheduling
contexts.

Our recent work in \cite{ChekuriIM09} shows that $\grdy$ is not
$O(1)$-competitive with any speed for $L_\infty$-norm of delay factor,
which is equivalent to minimizing the maximum delay factor. Thus, we
believe the speed requirement for $\grdy$ to obtain
$O(1)$-competitiveness needs to grow with $k$ for $L_k$-norms of delay
factor. It would be interesting to formally prove this. This raises
the question of whether there is an alternate algorithm that is
$O(1)$-speed $O(1)$-competitive for $L_k$ norms of flow time and delay
factor. We remark that the lower bound for $\grdy$ \cite{ChekuriIM09}
applies only to delay factor and it is open whether $\grdy$ is
$O(1)$-speed $O(1)$-competitive for $L_k$ norms of flow time.
It would be also an interesting direction
to find an $\lwf$-like algorithm that performs well when page sizes are different.

As we mentioned earlier, we have very recently obtained a tighter bound on $\lwf$: 
for any $\eps >0$, it is $O(1)$-competitive with $(2.777+ \eps)$-speed and is not $O(1)$-competitive 
with $(2 - \eps)$-speed. We conjecture that our new lowerbound is tight. 
We will include our new results in the full version.

\bigskip
\noindent {\bf Acknowledgments:} We thank Kirk Pruhs for his helpful
comments and encouragement.

\bibliographystyle{plain}
\bibliography{LWF2}

\begin{thebibliography}{10}

\bibitem{AcharyaFZ95}
S.~Acharya, M.~Franklin, and S.~Zdonik.
\newblock Dissemination-based data delivery using broadcast disks.
\newblock {\em Personal Communications, IEEE [see also IEEE Wireless
  Communications]}, 2(6):50--60, Dec 1995.

\bibitem{AksoyF98}
Demet Aksoy and Michael~J. Franklin.
\newblock "rxw: A scheduling approach for large-scale on-demand data broadcast.
\newblock {\em IEEE/ACM Trans. Netw.}, 7(6):846--860, 1999.

\bibitem{AmmarW85}
M.~H. Ammar and J.~W. Wong.
\newblock The design of teletext broadcast cycles.
\newblock {\em Performance Evaluation}, 5(4):235--242, 1985.

\bibitem{BansalCKN05}
Nikhil Bansal, Moses Charikar, Sanjeev Khanna, and Joseph~(Seffi) Naor.
\newblock Approximating the average response time in broadcast scheduling.
\newblock In {\em SODA '05: Proceedings of the sixteenth annual ACM-SIAM
  symposium on Discrete algorithms}, pages 215--221, 2005.

\bibitem{BansalCS06}
Nikhil Bansal, Don Coppersmith, and Maxim Sviridenko.
\newblock Improved approximation algorithms for broadcast scheduling.
\newblock In {\em SODA '06: Proceedings of the seventeenth annual ACM-SIAM
  symposium on Discrete algorithm}, pages 344--353, 2006.

\bibitem{BansalP03}
Nikhil Bansal and Kirk Pruhs.
\newblock Server scheduling in the l$_{\mbox{p}}$ norm: a rising tide lifts all
  boat.
\newblock In {\em STOC}, pages 242--250, 2003.

\bibitem{BarnoyBNS98}
Amotz Bar-Noy, Randeep Bhatia, Joseph~(Seffi) Naor, and Baruch Schieber.
\newblock Minimizing service and operation costs of periodic scheduling.
\newblock {\em Math. Oper. Res.}, 27(3):518--544, 2002.

\bibitem{BartalM00}
Yair Bartal and S.~Muthukrishnan.
\newblock Minimizing maximum response time in scheduling broadcasts.
\newblock In {\em SODA '00: Proceedings of the eleventh annual ACM-SIAM
  symposium on Discrete algorithms}, pages 558--559, 2000.

\bibitem{BenderCT08}
Michael~A. Bender, Rapha{\"e}l Clifford, and Kostas Tsichlas.
\newblock Scheduling algorithms for procrastinators.
\newblock {\em J. Scheduling}, 11(2):95--104, 2008.

\bibitem{ChanLTW04}
Wun-Tat Chan, Tak~Wah Lam, Hing-Fung Ting, and Prudence W.~H. Wong.
\newblock New results on on-demand broadcasting with deadline via job
  scheduling with cancellation.
\newblock In Kyung-Yong Chwa and J.~Ian Munro, editors, {\em COCOON}, volume
  3106 of {\em Lecture Notes in Computer Science}, pages 210--218, 2004.

\bibitem{ChangEGK08}
Jessica Chang, Thomas Erlebach, Renars Gailis, and Samir Khuller.
\newblock Broadcast scheduling: algorithms and complexity.
\newblock In {\em SODA '08: Proceedings of the nineteenth annual ACM-SIAM
  symposium on Discrete algorithms}, pages 473--482. Society for Industrial and
  Applied Mathematics, 2008.

\bibitem{CharikarK06}
Moses Charikar and Samir Khuller.
\newblock A robust maximum completion time measure for scheduling.
\newblock In {\em SODA '06: Proceedings of the seventeenth annual ACM-SIAM
  symposium on Discrete algorithm}, pages 324--333, 2006.

\bibitem{ChekuriGKK04}
Chandra Chekuri, Ashish Goel, Sanjeev Khanna, and Amit Kumar.
\newblock Multi-processor scheduling to minimize flow time with epsilon
  resource augmentation.
\newblock In L{\'a}szl{\'o} Babai, editor, {\em STOC}, pages 363--372, 2004.

\bibitem{ChekuriIM09}
Chandra Chekuri, Sungjin Im, and Benjamin Moseley.
\newblock Minimizing maximum response time and delay factor in broadcasting
  scheduling.
\newblock In {\em ESA '09: Proceedings of the seventeenth annual European
  symposium on algorithms (to appear)}, 2009.

\bibitem{ChekuriM09}
Chandra Chekuri and Benjamin Moseley.
\newblock Online scheduling to minimize the maximum delay factor.
\newblock In {\em SODA '09: Proceedings of the twentieth annual ACM-SIAM
  symposium on Discrete algorithm}, 2009.

\bibitem{ChrobakDJKK06}
Marek Chrobak, Christoph D\"{u}rr, Wojciech Jawor, Lukasz Kowalik, and Maciej
  Kurowski.
\newblock A note on scheduling equal-length jobs to maximize throughput.
\newblock {\em J. of Scheduling}, 9(1):71--73, 2006.

\bibitem{Deb84}
R.~K. Deb.
\newblock Optimal control of bulk queues with compound poisson arrivals and
  batch service.
\newblock {\em Opsearch.}, 21:227--245, 1984.

\bibitem{DebS73}
R.~K. Deb and R.~F. Serfozo.
\newblock Optimal control of batch service queues.
\newblock {\em Adv. Appl. Prob.}, 5:340--361, 1973.

\bibitem{Edmonds00}
Jeff Edmonds.
\newblock Scheduling in the dark.
\newblock {\em Theor. Comput. Sci.}, 235(1):109--141, 2000.

\bibitem{EdmondsP03}
Jeff Edmonds and Kirk Pruhs.
\newblock Multicast pull scheduling: When fairness is fine.
\newblock {\em Algorithmica}, 36(3):315--330, 2003.

\bibitem{EdmondsP04}
Jeff Edmonds and Kirk Pruhs.
\newblock A maiden analysis of longest wait first.
\newblock {\em ACM Trans. Algorithms}, 1(1):14--32, 2005.

\bibitem{EdmondsP09}
Jeff Edmonds and Kirk Pruhs.
\newblock Scalably scheduling processes with arbitrary speedup curves.
\newblock In {\em SODA '09: Proceedings of the twentieth annual ACM-SIAM
  symposium on Discrete algorithm}, 2009.

\bibitem{ErlebachH02}
Thomas Erlebach and Alexander Hall.
\newblock Np-hardness of broadcast scheduling and inapproximability of
  single-source unsplittable min-cost flow.
\newblock In {\em SODA '02: Proceedings of the thirteenth annual ACM-SIAM
  symposium on Discrete algorithms}, pages 194--202, 2002.

\bibitem{GandhiKKW04}
Rajiv Gandhi, Samir Khuller, Yoo-Ah Kim, and Yung-Chun~(Justin) Wan.
\newblock Algorithms for minimizing response time in broadcast scheduling.
\newblock {\em Algorithmica}, 38(4):597--608, 2004.

\bibitem{GandhiKPS06}
Rajiv Gandhi, Samir Khuller, Srinivasan Parthasarathy, and Aravind Srinivasan.
\newblock Dependent rounding and its applications to approximation algorithms.
\newblock {\em J. ACM}, 53(3):324--360, 2006.

\bibitem{Hall03}
Alexander Hall and Hanjo T\"aubig.
\newblock Comparing push- and pull-based broadcasting. or: Would ``microsoft
  watches'' profit from a transmitter?.
\newblock In {\em Proceedings of the 2nd International Workshop on Experimental
  and Efficient Algorithms (WEA 03)}, pages 148--164, 2003.

\bibitem{KalyanasundaramP95}
Bala Kalyanasundaram and Kirk Pruhs.
\newblock Speed is as powerful as clairvoyance.
\newblock {\em J. ACM}, 47(4):617--643, 2000.

\bibitem{KalyanasundaramPV00}
Bala Kalyanasundaram, Kirk Pruhs, and Mahendran Velauthapillai.
\newblock Scheduling broadcasts in wireless networks.
\newblock {\em Journal of Scheduling}, 4(6):339--354, 2000.

\bibitem{KhullerK04}
Samir Khuller and Yoo~Ah Kim.
\newblock Equivalence of two linear programming relaxations for broadcast
  scheduling.
\newblock {\em Oper. Res. Lett.}, 32(5):473--478, 2004.

\bibitem{Kimc04}
Jae-Hoon Kim and Kyung-Yong Chwa.
\newblock Scheduling broadcasts with deadlines.
\newblock {\em Theor. Comput. Sci.}, 325(3):479--488, 2004.

\bibitem{Pruhs07}
Kirk Pruhs.
\newblock Competitive online scheduling for server systems.
\newblock {\em SIGMETRICS Perform. Eval. Rev.}, 34(4):52--58, 2007.

\bibitem{PruhsST}
Kirk Pruhs, Jiri Sgall, and Eric Torng.
\newblock {\em Handbook of Scheduling: Algorithms, Models, and Performance
  Analysis}, chapter Online Scheduling.
\newblock 2004.

\bibitem{Weiss79}
J.~Weiss.
\newblock Optimal control of batch service queues with nonlinear waiting costs.
\newblock {\em Modeling and Simulation}, 10:305--309, 1979.

\bibitem{WeissP81}
J.~Weiss and S.~Pliska.
\newblock Optimal policies for batch service queueing systems.
\newblock {\em Opsearch}, 19(1):12--22, 1982.

\bibitem{Wong88}
J.~Wong.
\newblock Broadcast delivery.
\newblock {\em Proceedings of the IEEE}, 76(12):1566--1577, 1988.

\bibitem{ZhengFCCPW06}
Feifeng Zheng, Stanley P.~Y. Fung, Wun-Tat Chan, Francis Y.~L. Chin,
  Chung~Keung Poon, and Prudence W.~H. Wong.
\newblock Improved on-line broadcast scheduling with deadlines.
\newblock In Danny~Z. Chen and D.~T. Lee, editors, {\em COCOON}, volume 4112 of
  {\em Lecture Notes in Computer Science}, pages 320--329, 2006.

\end{thebibliography}
\appendix


\section{Analysis of 4-speed}

\label{sec:4speed} In this section, we further improve the speed from $5$ to $4$ in the discrete time model. We assume
the speed $s = 4$ throughout this section.

\begin{theorem}
\label{thm:4spd} $\lwf$ is $4$-speed $O(1)$-competitive.
\end{theorem}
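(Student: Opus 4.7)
\begin{proofof}{Theorem~\ref{thm:4spd} (proposal)}
The plan is to mimic the structure of the $5$-speed proof, keeping Lemmas~\ref{lem:SC} and~\ref{lem:NSCl} intact, and to tighten only the bound on $\lwf_4^{N^e}$. The obstacle in pushing the $5$-speed argument down to $4$-speed is that Lemma~\ref{lem:close} loses a factor roughly $\lambda$ when the charging target $E_{q,y}$ ends near the beginning of $[o'_{p,x},e_{p,x}-1]$; so charging early requests only to the second half of this interval already ``wastes'' half of the broadcasts. The idea is to partition $N$ into two classes $N_1=\one$ and $N_2=\two$ according to how many self-chargeable events end inside $[o'_{p,x},e_{p,x}-1]$, and treat the two classes differently.

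Fix a parameter $\alpha\in(0,1)$. Call a non-self-chargeable event $E_{p,x}$ a $\one$-event if at least $\alpha\cdot 4\cdot(e_{p,x}-o'_{p,x})$ of the $4(e_{p,x}-o'_{p,x})$ broadcasts of $\lwf_4$ inside $[o'_{p,x},e_{p,x}-1]$ correspond to self-chargeable events; otherwise call it a $\two$-event. For $\one$-events I would charge $\fe_{p,x}$ to the flow-time of those many self-chargeable events via exactly the Hall-type argument of Section~\ref{sec:5speed}: each such event $E_{q,y}$ in the second half of $[o'_{p,x},e_{p,x}-1]$ already pays $F_{q,y}\ge \tfrac12\fe_{p,x}$ by Lemma~\ref{lem:close}, and since these events are self-chargeable, their $\lwf$-flow-time is in turn bounded by $\rho\opt^S$. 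A fractional Hall argument, identical in shape to the one used for $\lwf_5^{N^e}$, then gives a bound of the form $\sum_{E_{p,x}\in N_1}\fe_{p,x}\le c_1(\alpha,\rho)\,\opt$.

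For $\two$-events the saving comes from two observations. First, by definition a $\two$-event $E_{p,x}$ has strictly more than $(1-\alpha)4(e_{p,x}-o'_{p,x})$ non-self-chargeable events ending in $[o'_{p,x},e_{p,x}-1]$. Second, and this is the heart of the improvement, I claim that for every event $E_{r,z}$ with $e_{r,z}\in[o'_{p,x},e_{p,x}-1]$ one has $F_{r,z}\ge \gamma\fe_{p,x}$ for a fixed constant $\gamma=\gamma(\alpha,\rho)>0$, i.e.\ no ``loss at the left endpoint.'' To prove this, pick a \emph{bridge} $E_{q,y}$ with $e_{q,y}$ in the right half of $[o'_{p,x},e_{p,x}-1]$ so that $F_{q,y}\ge \tfrac12\fe_{p,x}$ by Lemma~\ref{lem:close}; among the many available bridges inside the right half we can insist that $E_{q,y}$ itself is non-self-chargeable (the count of non-self-chargeable broadcasts inside the right half is $\Omega((1-\alpha)(e_{p,x}-o'_{p,x}))$, which is positive for $\alpha$ close to $1/2$). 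Now apply Lemma~\ref{lem:close} a second time with $E_{q,y}$ playing the role of the outer event and $E_{r,z}$ lying in its interval $[o'_{q,y},e_{q,y}-1]\supseteq[o'_{p,x},e_{q,y}-1]$: this gives $F_{r,z}\ge \tfrac12\fe_{q,y}\ge \gamma\fe_{p,x}$. Thus every one of the $4(e_{p,x}-o'_{p,x})$ broadcasts in the whole interval $[o'_{p,x},e_{p,x}-1]$ pays $\Omega(\fe_{p,x})$, doubling the ``mass'' available for charging.

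With this stronger local bound I would set up the bipartite graph exactly as in Section~\ref{sec:5speed}, but using the full intervals $I_{p,x}=[o'_{p,x},e_{p,x}-1]$ instead of their right halves on the $\two$-side. The key combinatorial input is that for any $Z\subseteq N_2$ the union $\mathcal I=\bigcup_{E_{p,x}\in Z}I_{p,x}$ has length at least $|Z|$, since $\opt$ has speed $1$ and performs a distinct broadcast $o_{p,x}$ for each $E_{p,x}\in Z$; hence $|N_G(Z)|\ge 4|\mathcal I|\cdot\tfrac{\rho-1}{\rho}\ge \tfrac{4(\rho-1)}{\rho}|Z|$, and Lemma~\ref{lem:Hall_thm} yields a $\tfrac{\rho}{4(\rho-1)}$-covering. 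Combined with the bound $F_{r,z}\ge \gamma\fe_{p,x}$, this produces $\sum_{E_{p,x}\in N_2}\fe_{p,x}\le \tfrac{1}{\gamma}\cdot\tfrac{\rho}{4(\rho-1)}\,\lwf_4$. Summing the $N_1$ and $N_2$ contributions with Lemmas~\ref{lem:SC} and~\ref{lem:NSCl} gives an inequality $\lwf_4\le C_1(\alpha,\rho)\opt+\delta(\alpha,\rho)\lwf_4$. The main obstacle is to choose $\alpha$ and $\rho$ so that $\delta<1$; I expect that $\alpha\approx 1/2$ together with $\rho$ a sufficiently large constant works, because the improvement from doubling the charging window is exactly what saves the missing unit of speed, and the remaining tuning is routine.
\end{proofof}
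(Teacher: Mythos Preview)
Your high-level plan matches the paper's: split $N$ into $\one$ and $\two$ by the density of self-chargeable events in $[o'_{p,x},e_{p,x}-1]$, charge $\one$ directly to $\opt$, and use a ``bridge'' event to reach the left end of the interval for $\two$. The gap is in the bridge step itself, and it is a real one.

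Your chain for a $\two$-event $E_{p,x}$ is $F_{r,z}\ge \tfrac12\fe_{q,y}\ge \gamma\fe_{p,x}$. Both inequalities fail as written. First, Lemma~\ref{lem:close} applied with $E_{q,y}$ as the outer event only bounds $F_{r,z}$ below by $\lambda\,\fe_{q,y}$, where $\lambda$ depends on the position of $e_{r,z}$ inside $[o'_{q,y},e_{q,y}-1]$; you asserted $o'_{q,y}\le o'_{p,x}$ (equivalently $[o'_{q,y},e_{q,y}-1]\supseteq[o'_{p,x},e_{q,y}-1]$), but ``$E_{q,y}$ is non-self-chargeable'' does not imply this --- $E_{q,y}$ could be a short event sitting entirely inside the right half of $[o'_{p,x},e_{p,x}-1]$. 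Second, and more seriously, the only relation you have between $E_{q,y}$ and $E_{p,x}$ is $F_{q,y}\ge\tfrac12\fe_{p,x}$, i.e.\ $\fe_{q,y}+\fl_{q,y}\ge\tfrac12\fe_{p,x}$. This gives no lower bound on $\fe_{q,y}$ alone; if the bridge's flow-time is dominated by its late requests, $\fe_{q,y}$ can be negligible and the chain breaks. Hence the claim ``$F_{r,z}\ge\gamma\fe_{p,x}$ for every $E_{r,z}$ ending in $[o'_{p,x},e_{p,x}-1]$'' is not established, and the $\two$ bound $\sum_{N_2}\fe_{p,x}\le\frac{1}{\gamma}\cdot\frac{\rho}{4(\rho-1)}\lwf_4$ does not follow.

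The paper does not try to prove such a uniform lower bound. Instead it accepts the split $F_{q,y}=\fe_{q,y}+\fl_{q,y}$, bounds $\fl_{q,y}\le\rho F^*_{q,y}$ (late requests are charged to $\opt$), and obtains an inequality of the form $\fe_{p,x}\le c_1 F_{r,z}+c_2 F^*_{q,y}$ rather than $\fe_{p,x}\le c F_{r,z}$. Two further ingredients you are missing then become necessary: (i) a counting/pigeonhole argument using that $\opt$ has speed~$1$ to show that \emph{many} non-self-chargeable events ending in the right half actually have $o'_{q,y}$ far to the left of $o'_{p,x}$ (this is where the bridge definition comes from), and (ii) a second Hall-type argument over bridges to keep the total $\sum F^*_{q,y}$ charged under control. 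For the $F_{r,z}$ part, the paper does not use a Hall argument over the full interval at all; it charges each $\two$-event $E_{p,x}$ only to the $s=4$ events ending at the single slot $e_{r,z}=o_{p,x}$, which is unique to $E_{p,x}$, so no overcharging occurs and the coefficient in front of $\lwf_4$ is $c_1/4<1$ after parameter tuning.
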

\begin{proof}
    By combining Lemma~\ref{lem:SC}, \ref{lem:NSCl}, \ref{lem:type1} and \ref{lem:type2} (Lemma~\ref{lem:type1} and \ref{lem:type2} will be proved soon), it follows that
\begin{eqnarray*}
                 \lwf_{4}   &=   & \lwfs_4 + \lwfnl_4 + \lwfneone_4 + \lwfnetwo_4 \\
                            &\leq& \len\opts + \len\optn + \frac{4 \len}{ \alpha^2} \opt + \frac{3 - 8\alpha - 8\gamma}{4(1 - 4\alpha - 4\gamma)} \lwf_4 + \frac{\len}{4\gamma} \optn
\end{eqnarray*}
Setting $\len = 128$, $\alpha = 1/32$ and $\gamma = 1/32$ completes the proof.
\end{proof}

In Section~\ref{sec:5speed} early requests of each non-self-chargeable event $E_{p,x}$ were charged to events that
ended on the last half of $[o'_{p,x}, e_{p,x}-1]$. This was a compromise between using more events vs. finding quality
events. In other words, if we use more events ending in $[o'_{p,x}, e_{p,x}-1]$, the average quality of those events degrades because  events ending close to $o'_{p,x}$ do not have flow time comparable to $\fe_{p,x}$. On the other hand,
if we use only quality events, we can only charge to a small faction of events ending on $[o'_{p,x}, e_{p,x}-1]$. To overcome this issue, we will show that all events
ending in $[o'_{p,x}, e_{p,x}-1]$ have comparable flow time with $\fe_{p,x}$ if only a small number of self-chargeable events end on $[o'_{p,x}, e_{p,x}-1]$.  This will then improve our bound on the speed.   For the other case where $E_{p,x}$ has
many self-chargeable events, $\fe_{p,x}$ will be directly charged to those self-chargeable events having comparable
flow time with $\fe_{p,x}$, thus directly to $\opt$.

We now describe this idea in more details. Non-self-chargeable events in $N$ are partitioned into two disjoint sets
$N_1$ and $N_2$ depending on they have many self-chargeable events or not. Formally, Non-self-chargeable event
$E_{p,x}$ is said to be $\one$ and in $N_1$ if it has at least $\alpha s (e_{p,x}- o'_{p,x})$ self-chargeable events
where $\alpha <1$ is some constant to be fixed later. The rest of the events in $N$ are in $N_2$ and said to be $\two$.
We let $\lwf^{N_1e}_4$ and $\lwf^{N_2e}_4$ denote the total flow time of early requests of $N_1$ and $N_2$, respectively.
As mentioned already, the $\one$ events can be charged to the optimal solution because it has many self-chargeable
events. For each $\two$ event, we will bound $\fe_{p,x}$ with events which end at $o_{p,x}$. Recall that
Lemma~\ref{lem:close} cannot compare $\fe_{p,x}$ and $F_{r,z}$, where $E_{r,z}$ is an event ending at $o_{p,x}$, i.e.
$e_{r,z} = o_{p,x}$. Thus we find a \emph{bridge} events which start from a way before $o'_{p,x}$ and end close to
$e_{p,x}$. Since each bridge event $E_{q,y}$ substantially overlap both with $E_{p,x}$ and with $e_{r,z}$, we can
compare $\fe_{p,x}$ with $F_{q,y}$ and $\fe_{q,y}$ with $F_{r,z}$, thereby $\fe_{p,x}$ with $F_{r,z}$. We also observe
that each $E_{r,z}$ is charged by one event $E_{p,x}$ such that $o_{p,x} = e_{r,z}$, as each non-self-chargeable event
has its unique last broad cast time. Thus we are safe from overcharging.

In the following lemma, we directly charge early requests of $\one$ events to $\opt$. For the goal, we charge early
requests of each $\one$ event $E_{p,x}$ to self-chargeable events having flow time comparable to $\fe_{p,x}$ which end
on $[o'_{p,x}, e_{p,x} - 1]$. By the definition of $\one$ events, we know that each $\one$ event $E_{p,x}$ has many
events that it can be charged to. However, to the overcharging issue, we resort to a global charging scheme again using
the modified Hall's theorem. We separate how to find a covering to Lemma~\ref{lem:cover2}, as we will use it again for
charging $\two$ events.

\begin{lemma}
\label{lem:type1} If $\alpha \len \geq 4$, then $\lwf^{N_1e}_4 \leq \frac{4 \len}{ \alpha^2} \opt$.
\end{lemma}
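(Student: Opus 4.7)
The plan is to charge each Type1 event's early-request flow time $\fe_{p,x}$ directly to self-chargeable events, whose total flow time is already bounded by $\len\opt$ via Lemma~\ref{lem:SC}. The Type1 definition guarantees many self-chargeable events ending in $I_{p,x}=[o'_{p,x},e_{p,x}-1]$, but the pointwise comparison from Lemma~\ref{lem:close} only bites when those events are close enough to $e_{p,x}$, so I focus on the tail sub-interval $I''_{p,x}=[\lceil o'_{p,x}+(\alpha/2)|I_{p,x}|\rceil,\,e_{p,x}-1]$.

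First I would check that $I''_{p,x}$ retains plenty of self-chargeable events. The Type1 condition supplies at least $4\alpha|I_{p,x}|$ self-chargeable events in $I_{p,x}$; the complement $I_{p,x}\setminus I''_{p,x}$ contains at most $4\lceil(\alpha/2)|I_{p,x}|\rceil\le 2\alpha|I_{p,x}|+4$ events of any type, since $\lwf_4$ broadcasts four pages per slot. Subtracting, at least $2\alpha|I_{p,x}|-4\ge\alpha|I_{p,x}|$ self-chargeable events end in $I''_{p,x}$, where the last step uses the hypothesis $\alpha\len\ge 4$ together with $|I_{p,x}|\ge\len$. Applying Lemma~\ref{lem:close} with $\lambda=\alpha/2$ to any such self-chargeable event $E_{q,y}$ then gives $\fe_{p,x}\le(2/\alpha)F_{q,y}\le(2\len/\alpha)F^*_{q,y}$.

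To avoid overcharging when distinct Type1 events share self-chargeable witnesses, I would construct the bipartite graph $G=(X\cup Y,E)$ with $X$ the Type1 events, $Y$ the self-chargeable events, and $u_{p,x}v_{q,y}\in E$ iff $e_{q,y}\in I''_{p,x}$, and invoke Fractional Hall's theorem (Lemma~\ref{lem:Hall_thm}) to produce a $(2/\alpha)$-covering $\ell$. Given $\ell$, a standard order-swap yields
\begin{eqnarray*}
\lwf^{N_1e}_4 &=& \sum_{u_{p,x}\in X}\fe_{p,x} \;=\; \sum_{uv\in E}\ell_{u,v}\fe_{p,x}\\
&\le& \frac{2}{\alpha}\sum_{uv\in E}\ell_{u,v}F_{q,y} \;\le\; \frac{4}{\alpha^2}\sum_{v\in Y}F_{q,y} \;\le\; \frac{4\len}{\alpha^2}\opt.
\end{eqnarray*}

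The main obstacle is the existence of this covering, i.e.\ showing $|N_G(Z)|\ge(\alpha/2)|Z|$ for every $Z\subseteq X$. The easy handles are $|Z|\le|\mathcal{I}(Z)|$, where $\mathcal{I}(Z)=\bigcup_{u\in Z}I_{p,x}$, coming from OPT's $1$-speed constraint (as in the $5$-speed proof), and the per-event count $|S''_{p,x}|\ge\alpha|I_{p,x}|$; but these bounds can multiply-count self-chargeable events shared across overlapping $I''_{p,x}$'s, so a naive summation undercounts $|N_G(Z)|$. My plan is to isolate this combinatorial step as an auxiliary Lemma~\ref{lem:cover2}, which is also reused when charging the Type2 events; its proof combines Lemma~\ref{lem:intervals}, applied to the $(1-\alpha/2)$-fractions $I''_{p,x}$ of $I_{p,x}$, with the fact that each time slot carries at most $4$ events, in order to convert the per-interval density of self-chargeable events into a lower bound on the number of distinct self-chargeable neighbours of $Z$.
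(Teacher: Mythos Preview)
Your overall structure matches the paper exactly: restrict to the tail sub-interval $I''_{p,x}$, count surviving self-chargeable events using the $\one$ hypothesis and $\alpha\len\ge 4$, apply Lemma~\ref{lem:close} with $\lambda=\alpha/2$, build the bipartite graph on $X=N_1$ and $Y=S$, extract a $(2/\alpha)$-covering via an auxiliary combinatorial lemma (the paper's Lemma~\ref{lem:cover2}), and run the summation swap. Even the arithmetic of the neighbour count $2\alpha|I_{p,x}|-4\ge\alpha|I_{p,x}|$ and the final chain $\frac{2}{\alpha}\cdot\frac{2}{\alpha}\cdot\len\opts$ are identical to the paper.

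The one genuine gap is your sketched proof of the covering lemma. Lemma~\ref{lem:intervals} is the wrong tool here: it compares the \emph{lengths} $|\mathcal{I}''(Z)|$ and $|\mathcal{I}(Z)|$, whereas what you need is a lower bound on the number of \emph{distinct self-chargeable events} lying in $\bigcup_{u\in Z}I''_{p,x}$. Those events need not be spread uniformly across $\mathcal{I}''(Z)$; knowing that each $I''_{p,x}$ individually contains $\ge\alpha|I_{p,x}|$ self-chargeable events, together with a bound on $|\mathcal{I}''(Z)|$ and the fact that each slot hosts four events, still does not prevent the self-chargeable witnesses of different $u_{p,x}\in Z$ from collapsing onto a small common set when the $I''_{p,x}$ overlap heavily. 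This is precisely the difficulty you flagged, and Lemma~\ref{lem:intervals} does not resolve it.

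The paper's Lemma~\ref{lem:cover2} avoids Lemma~\ref{lem:intervals} altogether. It greedily selects a chain of intervals $I_{p,x}$ from $Z$ covering $\mathcal{I}(Z)$ and alternately labels them odd and even; each of the two subfamilies is pairwise \emph{disjoint}, and the larger one covers at least $\tfrac12|\mathcal{I}(Z)|$. Disjointness is the key: on a disjoint subfamily the per-interval neighbour counts $\lambda|I_{p,x}|$ sum without collision, giving $|N_G(Z)|\ge(\lambda/2)|\mathcal{I}(Z)|\ge(\lambda/2)|Z|$ and hence a $(2/\lambda)$-covering. This disjoint-subfamily extraction is the missing ingredient in your sketch; note also that this form of Lemma~\ref{lem:cover2} only assumes a lower bound on $|N_G(u_{p,x})|$ and that neighbours end in $I_{p,x}$, which is exactly the generality needed when it is reused for the $\two$ events.
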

\begin{proof}
  Let $G=(X \cup Y,E)$ be a bipartite graph where $u_{p,x} \in
  X$ iff $E_{p,x} \in N_1$, $v_{q,y} \in Y$ iff $E_{q,y} \in S$, and
  $u_{p,x}v_{q,y} \in E$ iff $e_{q,y} \in [o'_{p,x} + \lceil \alpha/2
  (e_{p,x} - o'_{p,x}) \rceil, e_{p,x}-1]$.
Note that if $u_{p,x}v_{q,y} \in E$, $\fe_{p,x} \leq \frac{2}{\alpha}F_{q,y}$ by Lemma~\ref{lem:close}.
  It can be observed that each vertex
  $u_{p,x} \in X$ has at least $2\alpha (e_{p,x} -
  o'_{p,x})-4 \,( \geq \alpha (e_{p,x} - o'_{p,x})\,\,\textnormal{by the given condition})$ neighbors.
  This follows from the observations that at least $\alpha s(e_{p,x} - o'_{p,x})$ self-chargeable
  events end during $[o'_{p,x}, e_{p,x}-1]$ by definition of $\one$ and
  at most $s(\alpha/2(e_{p,x} - o'_{p,x}))+s$ events end
  during $[o'_{p,x}, o'_{p,x} + \lceil \alpha/2 (e_{p,x} - o'_{p,x}) \rceil -1]$. Thus
  $G$ has a $\frac{2}{\alpha}$-covering by Lemma~\ref{lem:cover2}. Let
  $\ell$ be such a covering. We now prove the final step.
\begin{eqnarray*}
\lwfneone_{4}
&=   &  \sum_{u_{p,x} \in X} \fe_{p,x} = \sum_{u_{p,x}v_{q,y} \in E} \ell_{u_{p,x},v_{q,y}}  \fe_{p,x} \mbox{[By Definition~\ref{def:covering}]}  \\
&\leq&  \sum_{u_{p,x}v_{q,y} \in E} \ell_{u_{p,x},v_{q,y}} \frac{2}{\alpha}F_{q, y} \mbox{[By Lemma~\ref{lem:close}]}  \\
&\leq& \frac{2}{\alpha} \frac{2}{\alpha} \sum_{v_{q ,y} \in Y} F_{q, y}    \mbox{[Change order of summation and $\ell$ is $\frac{2}{\alpha}$-covering]}  \\
&\leq& \frac{4}{ \alpha^2 } \lwfs_4 \mbox{ [Since $Y$ includes all self-chargeable events]}  \\
&\leq& \frac{4}{ \alpha^2 } \len \opts \mbox{[By Lemma~\ref{lem:SC}]}
\end{eqnarray*}
\end{proof}

The following lemma states, when $G$ is a bipartite graph whose parts are a subset of non-self-chargeable events and a
subset of all events respectively, the quality of covering in terms of how many neighbors each non-self-chargeable
event has. The main difference from what was done for finding a covering in the proof of Lemma~\ref{lem:5speedNSC} is
that here each non-self-chargeable event is not required to have all events ending in some sub-interval as its
neighbors.

\begin{lemma}
    \label{lem:cover2}
    Let $A$ denote all events. Let $G = (X \cup Y, E)$ be a bipartite graph where there exists only one vertex $u_{p,x} \in X$ only if $E_{p,x} \in N$,
    there exists only one vertex $v_{q,y} \in Y$ only if $E_{q,y} \in A$ and $v_{q,y} \in N_G(u_{p,x})$ only if $e_{q,y} \in [o'_{p,x}, e_{p,x}-1]$.
     Suppose that $\exists \lambda>0$ such that $\forall u_{p,x} \in X,  |N_G(u_{p,x})| \geq \lambda(e_{p,x} - o'_{p,x})$. Then
     there exists $\frac{2}{\lambda}$-covering for $X$.
\end{lemma}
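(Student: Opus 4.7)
The plan is to invoke Fractional Hall's Theorem (Lemma~\ref{lem:Hall_thm}), which reduces the existence of a $\frac{2}{\lambda}$-covering to verifying the expansion condition $|N_G(Z)| \geq \frac{\lambda}{2}|Z|$ for every $Z \subseteq X$. Fix such a $Z$, set $I_{p,x} = [o'_{p,x}, e_{p,x} - 1]$ for each $u_{p,x}\in Z$, and let $\mathcal{I} = \bigcup_{u_{p,x} \in Z} I_{p,x}$. First I would reuse the ``OPT broadcast'' argument from the proof of Lemma~\ref{lem:5speedNSC}: because $E_{p,x}$ is non-self-chargeable for each $u_{p,x} \in Z$, OPT must broadcast $p$ at some time $o_{p,x} \in I_{p,x}$, and since OPT has unit speed, distinct events in $Z$ yield distinct broadcast times inside $\mathcal{I}$. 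This gives $|Z| \leq |\mathcal{I}|$.

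The main step will be to show $|N_G(Z)| \geq \frac{\lambda}{2}|\mathcal{I}|$, which combined with the first step yields the required expansion. My plan follows the template of Lemma~\ref{lem:5speedNSC}: by Lemma~\ref{lem:intervals}, the union $\mathcal{I}'$ of the ``second halves'' $[o'_{p,x} + \lceil|I_{p,x}|/2\rceil, e_{p,x}-1]$ of the $I_{p,x}$ satisfies $|\mathcal{I}'| \geq |\mathcal{I}|/2$. The new difficulty is that the edges of $G$ are no longer pinned to a specific sub-interval: we know only that $|N_G(u_{p,x})| \geq \lambda|I_{p,x}|$ and that the neighbors end in $I_{p,x}$. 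To bridge this gap I would use a double counting argument. On one hand, summing the degree lower bound gives $\sum_{u \in Z}|N_G(u)| \geq \lambda\sum_{u \in Z}|I_u| \geq \lambda|\mathcal{I}|$; on the other hand, $\sum_{u \in Z}|N_G(u)| = \sum_{v \in N_G(Z)} d_Z(v)$, where $d_Z(v) = |\{u \in Z : v \in N_G(u)\}|$, and each $d_Z(v)$ is at most the number of intervals $I_u$ containing the end time of $v$. Bounding this overlap multiplicity by combining the $s$-events-per-slot structure of $\lwf_s$ with the OPT-broadcast distinctness already used to derive $|Z|\le|\mathcal{I}|$ then gives $|N_G(Z)| \geq \frac{\lambda}{2}|\mathcal{I}|$, completing the Hall check.

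The hard part will be this second step. Unlike Lemma~\ref{lem:5speedNSC}, where $|N_G(Z)|$ was exactly $s|\mathcal{I}'|$ because every event ending in the second half was a neighbor, here the neighbors are only lower-bounded in count and can lie anywhere inside $I_{p,x}$. Making the bookkeeping tight enough to obtain the factor $\frac{1}{2}$ requires exploiting both the interval length lower bound $|I_{p,x}|\geq\rho$ (valid for every non-self-chargeable event) and the OPT-broadcast distinctness in order to turn the aggregate inequality $\sum_u|N_G(u)| \geq \lambda|\mathcal{I}|$ into the per-subset inequality $|N_G(Z)| \geq \frac{\lambda}{2}|\mathcal{I}|$.
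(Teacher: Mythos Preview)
Your overall plan is right: invoke Fractional Hall, show $|Z|\le|\mathcal I|$ via the OPT-broadcast distinctness, and then show $|N_G(Z)|\ge\tfrac{\lambda}{2}|\mathcal I|$. The first two steps match the paper. The gap is in the last step.

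The double-counting argument you outline does not go through. From $\sum_{u\in Z}|N_G(u)|\ge\lambda\sum_{u\in Z}|I_u|$ and $\sum_{u}|N_G(u)|=\sum_{v}d_Z(v)$ you can only conclude $|N_G(Z)|\ge \lambda\bigl(\sum_u|I_u|\bigr)/\max_v d_Z(v)$, and $\max_v d_Z(v)$ is bounded only by the point-multiplicity of the family $\{I_u\}$, which can be arbitrarily large even when the $o_{p,x}$ are all distinct and every $|I_u|\ge\rho$. Concretely, take $I_0=[1,L]$ together with $I_j=[L-j,L]$ for $j=1,\dots,M$ (with $o_0=1$ and $o_j=L-j+1$, all distinct). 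Then $|\mathcal I|=L$, $\sum_u|I_u|=L+O(M^2)$, and the multiplicity at $t=L$ is $M+1$; choosing $L$ large relative to $M$ makes the double-counting bound roughly $\lambda L/M$, far below the required $\lambda L/2$. Neither the $s$-events-per-slot structure nor the $\rho$ lower bound helps here: the obstruction is purely about how heavily the intervals $\{I_u\}$ can pile up at a single time, and neither hypothesis controls that. (The lemma is still true in this example, because $|N_G(u_0)|\ge\lambda L$ alone already gives the bound---but the double counting does not see this.)

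The paper's proof uses a different, decisive idea: from $\{I_u:u\in Z\}$ it extracts a subfamily of pairwise \emph{disjoint} intervals whose union still has size at least $|\mathcal I|/2$. One discards intervals contained in others, takes a greedy cover of $\mathcal I$ by the remaining intervals, and observes that the odd-indexed and even-indexed members of this cover are each pairwise disjoint; the larger of the two classes covers at least half of $\mathcal I$. For a disjoint subfamily $I_{u_1},\dots,I_{u_m}$ the neighbor sets $N_G(u_i)$ are automatically disjoint, since every edge out of $u_i$ goes to a vertex whose end time lies in $I_{u_i}$. Hence
\[
|N_G(Z)|\;\ge\;\sum_{i}|N_G(u_i)|\;\ge\;\lambda\sum_{i}|I_{u_i}|\;\ge\;\tfrac{\lambda}{2}\,|\mathcal I|.
\]
This disjoint-subfamily step is exactly the idea your proposal is missing; once you have it, neither Lemma~\ref{lem:intervals}, nor the $\rho$ lower bound, nor the $s$-per-slot fact is needed.
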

\begin{proof}
  Consider any non-empty set $Z \subseteq X$ and its neighborhood $N(Z)$.
  We will show that $|N_G(Z)| \geq \lambda/2 |G|$. Let $I_{p,x} = [o'_{p,x}, e_{p,x}-1]$ and $\mathcal{I} = \bigcup_{u_{p,x}
    \in Z } I_{p,x}$. For simplicity we assume that $\mathcal{I}$ is a
  contiguous interval. Otherwise, the proof can be simply reduced to
  each maximal contiguous interval in $\mathcal{I}$.  First we show
  $N_G(Z) \geq \frac{\lambda}{2} |\mathcal{I}|$. We generously give up
  intervals in $\mathcal{I}$ which are contained in other intervals in
  $\mathcal{I}$ and order the remaining intervals in increasing order
  of their starting points. After picking up the first interval, we
  greedily pick up the next interval which the least overlaps with
  the previous chosen interval or starts just after the end of the interval.
  We index the chosen intervals according
  to their orders, 1,2,3 and so on.  Let $\mathcal{I}_{odd}$ and
  $\mathcal{I}_{even}$ be the odd-indexed and even-indexed intervals,
  respectively. Note that no intervals in $\mathcal{I}_{odd}$ overlap with each other.
  Likewise for $\mathcal{I}_{even}$. We have $|\mathcal{I}_{even}| + |\mathcal{I}_{odd}|
  \geq |\mathcal{I}|$, since $\mathcal{I} = \mathcal{I}_{even} \cup
  \mathcal{I}_{odd}$. WLOG, suppose
  $|\mathcal{I}_{odd}| \geq |\mathcal{I}_{even}|$.
  Let us consider any interval $I_{p',x'}$ in $\mathcal{I}_{odd}$.
  We know that $E_{p',x'}$ (or $u_{p',x'}$) has at least $\lambda(e_{p,x} - o'_{p,x})$,
  so by summing over all intervals in $\mathcal{I}_{odd}$, we can find at least
  $\lambda |\mathcal{I}_{odd}| \geq \lambda/2 |\mathcal{I}|$. Thus we have $|N_G(Z)| \geq \lambda/2 |\mathcal{I}|$
Also we have $|Z| \leq |\mathcal{I}|$; this is because the optimal solution has 1-speed and since it has to do a
separate broadcast for each event in $Z$. Combining these two inequalities, if follows that $|N_G(Z)| \geq
\frac{\lambda}{2}|\mathcal{Z}|$, and therefore $G$ has $\frac{2}{\lambda}$-covering by Lemma~\ref{lem:Hall_thm}.
\end{proof}

Our attention is now shifted to $\two$ events. As mentioned previously, the main idea is to find bridge events for each
$E_{p,x} \in N_2$. Formally, $E_{q,y}$ is said to be a \emph{bridge} event of $E_{p,x}$ if $o'_{q,y} \leq e_{p,x} - (2
- 4\alpha - 4\gamma )(e_{p,x}-o'_{p,x})$ and $e_{q,y} \in [o'_{p,x} + \lceil 1/2(e_{p,x}-o'_{p,x}) \rceil, e_{p,x}-1]$,
where $0<\gamma<1$ is a constant to be decided later. Let $\mathcal{B}(E_{p,x})$ be the set of bridge events of
$E_{p,x}$. Recall that we want to compare $E_{p,x}$ with $E_{r,z}$ such that $e_{r,z} = o_{p,x}$. Intuitively, a bridge
event $E_{q,y}$ bridges two events $E_{p,x}$ and $E_{r,z}$ by stretching over both events.   The following lemma says
that every $\two$ event has many bridge events.

\begin{lemma}
  \label{lem:bdg_num} If $4 \gamma \len \geq 1$, then for any $E_{p,x}
  \in N_2$, $|\mathcal{B}(E_{p,x})| \geq 4 \gamma
  (e_{p,x}-o'_{p,x})\geq 1$.
\end{lemma}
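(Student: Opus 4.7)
The plan is to count the bridge events of $E_{p,x}$ by starting with the pool of all events $E_{q,y}$ whose end time lies in $J := [o'_{p,x} + \lceil L/2 \rceil,\, e_{p,x}-1]$, where $L := e_{p,x} - o'_{p,x} \ge \len$, and then subtracting those events in the pool that violate the other requirement $o'_{q,y} \le e_{p,x} - (2-4\alpha-4\gamma)L$. Since $\lwf_4$ broadcasts four pages per slot and $|J| = \lfloor L/2 \rfloor$, the pool has size exactly $4\lfloor L/2 \rfloor \ge 2L-2$.

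An event in the pool can fail to be a bridge event in one of two ways: either (i) it is self-chargeable, or (ii) it is non-self-chargeable with $o'_{q,y} > e_{p,x} - (2-4\alpha-4\gamma)L$. The hypothesis $E_{p,x} \in N_2$ immediately bounds (i): at most $\alpha s L = 4\alpha L$ self-chargeable events end in the larger interval $[o'_{p,x},\, e_{p,x}-1]$, hence in $J$. For (ii), the key move is to pass from $o'_{q,y}$ to $o_{q,y}$: since $o_{q,y} \ge o'_{q,y}$ by definition of $o'$, each offender supplies a distinct integer $o_{q,y}$ in the window $(e_{p,x} - (2-4\alpha-4\gamma)L,\, e_{p,x}-2]$, which contains at most $(2-4\alpha-4\gamma)L - 2$ integers. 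Subtracting, $|\mathcal{B}(E_{p,x})| \ge 4\lfloor L/2\rfloor - 4\alpha L - ((2-4\alpha-4\gamma)L - 2) \ge 4\gamma L$, and then $4\gamma L \ge 4\gamma \len \ge 1$ by the hypotheses $L \ge \len$ and $4\gamma \len \ge 1$.

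The main obstacle, and the crux of the argument, is the distinctness of the times $\{o_{q,y}\}$ across different non-self-chargeable events; without it, the count of (ii)-offenders could be as large as $s \cdot (2-4\alpha-4\gamma)L$ and the bound would collapse. Distinctness follows from two observations: two events for the same page occupy disjoint time intervals, each containing its own $o_{q,y}$; and two events for different pages have distinct $o_{q,y}$ because $\opt$, being $1$-speed, broadcasts at most one page per integer slot. The remaining manipulations are elementary; the off-by-one slack in $4\lfloor L/2 \rfloor \ge 2L-2$ is exactly cancelled by the ``$-2$'' that arises from the upper bound $o_{q,y} \le e_{q,y}-1 \le e_{p,x}-2$, which is why the statement holds with the clean constant $4\gamma L$ rather than something weaker.
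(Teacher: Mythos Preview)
Your proof is correct and follows essentially the same approach as the paper's: count the events ending in $I' = [o'_{p,x}+\lceil L/2\rceil,\,e_{p,x}-1]$, subtract the self-chargeable ones using the $N_2$ hypothesis, and bound the remaining non-bridge events by the number of integer times in $(t_b,\,e_{p,x}-2]$ that $\opt$ could have used as last broadcasts. Your exposition actually makes the distinctness of the $o_{q,y}$ values explicit (via the $1$-speed of $\opt$ and the disjointness of same-page event intervals), whereas the paper leaves this implicit in the phrase ``time slots when $\opt$ can broadcast''; otherwise the two arguments are the same, including the same minor floor/ceiling bookkeeping.
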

\begin{proof}
  Let $E_{p,x} \in N_2$. Let $I = [o'_{p,x}, e_{p,x}-1]$ and $I' =
  [o'_{p,x} + \lceil 1/2(e_{p,x}-o'_{p,x}) \rceil, e_{p,x}-1]$.
  Our argument is simple; because there are many non-self-chargeable events ending in $I'$,
  the last optimal broadcast times of many of those events cannot be contained in $I'$, thus many events start a way
  earlier than $o'_{p,x}$. For the formal proof, we first show that
  (1)$\,$there are at least $(2 - 4 \alpha)(e_{p,x}-o'_{p,x}) -2$
  non-self-chargeable events that end during $I'$. This is because
  there are at least $s \lfloor 1/2(e_{p,x}-o'_{p,x}) \rfloor \geq 2(e_{p,x}-o'_{p,x})-2$ events which end during
  $I'$ and $\two$ event $E_{p,x}$ has at most $\alpha s
  (e_{p,x}-o'_{p,x})$ self-chargeable events which end during
  $I$ by definition. Note that for any non-self-chargeable event $E_{q,y}$ which
  ends on $I'$, $\opt$ must broadcast  page $q$ before $e_{p,x}$,
  more precisely $o_{q,z} < e_{q,z} < e_{p,x}$, that is $o_{q,z} \leq
  e_{p,x} -2$. Let $t_b = e_{p,x} - (2 - 4\alpha - 4\gamma
  )(e_{p,x}-o'_{p,x})$.  Finally, (2)$\,$there are at most $(2 -
  4\alpha - 4\gamma)(e_{p,x} - o'_{p,x}) - 2$ time slots when $\opt$
  can broadcast pages during $[\lceil t_b \rceil, e_{p,x}-2]$.
  From (1) and (2), we can deduce that $|\mathcal{B}(E_{p,x})| \geq 4 \gamma
  (e_{p,x}-o'_{p,x}) \geq 4 \gamma \len \geq 1$.

\end{proof}

In the next lemma, we show each bridge event $E_{q,y} \in \mathcal{B}(E_{p,x})$ provides a good comparison between
$\fe_{p,x}$ and the flow time of any event $F_{r,z}$ which end at $o_{p,x}$.

\begin{lemma}
  \label{lem:bdg} Suppose that $4 \gamma \len \geq 1$. Let $E_{p,x}
  \in N_2$, $E_{q,y} \in \mathcal{B}(E_{p,x})$ and $E_{r,z}$ be an
  event s.t. $e_{r,z} = o_{p,x}$.  Then, $\fe_{p,x} \leq \frac{3 -
    8\alpha - 8\gamma}{1 - 4\alpha - 4\gamma} F_{r,z} + 2 \len
  F^*_{q,y}$.
\end{lemma}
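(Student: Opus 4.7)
The plan is to chain two applications of Lemma~\ref{lem:close}, with the bridge event $E_{q,y}$ as an intermediate link between $\fe_{p,x}$ and $F_{r,z}$. Introduce $L:=e_{p,x}-o'_{p,x}$ and $\delta:=e_{p,x}-e_{q,y}$; by the bridge conditions, $\delta\in[1,L/2]$ and $o'_{q,y}\leq e_{p,x}-(2-4\alpha-4\gamma)L$.

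For the first link, set $\lambda_1:=(e_{q,y}-o'_{p,x})/L=1-\delta/L\geq 1/2$. Lemma~\ref{lem:close} applied to $(E_{p,x},E_{q,y})$ yields $F_{q,y}\geq\lambda_1\fe_{p,x}$. I then split $F_{q,y}=\fe_{q,y}+\fl_{q,y}$; this is legal because the bridge definition references $o'_{q,y}$ and hence forces $E_{q,y}$ to be non-self-chargeable, and the argument of Lemma~\ref{lem:NSCl} gives $\fl_{q,y}\leq\len F^*_{q,y}$ (either $\opt$'s wait dominates $\lwf$'s on the late requests, or each late request waits at most $\len$ slots).

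For the second link, set $\lambda'_2:=(e_{r,z}-o'_{q,y})/(e_{q,y}-o'_{q,y})$ and use $e_{r,z}=o_{p,x}$. Since $o_{p,x}\geq o'_{p,x}=e_{p,x}-L$ and $o'_{q,y}\leq e_{p,x}-(2-4\alpha-4\gamma)L$, the numerator is at least $(1-4\alpha-4\gamma)L$. Moreover, $\lambda'_2$ is monotone decreasing in $o'_{q,y}$, so its worst case over bridge-allowed configurations occurs at $o'_{q,y}=e_{p,x}-(2-4\alpha-4\gamma)L$, at which the denominator equals $(2-4\alpha-4\gamma)L-\delta$. Hence $\lambda'_2\geq(1-4\alpha-4\gamma)L/[(2-4\alpha-4\gamma)L-\delta]$, and Lemma~\ref{lem:close} applied to $(E_{q,y},E_{r,z})$ returns $\fe_{q,y}\leq F_{r,z}/\lambda'_2$. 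Combining,
\begin{equation*}
\fe_{p,x}\;\leq\;\frac{1}{\lambda_1\lambda'_2}\,F_{r,z}\;+\;\frac{\len}{\lambda_1}\,F^*_{q,y}.
\end{equation*}
Writing $c:=\delta/L\in[0,1/2]$, the coefficient of $F_{r,z}$ becomes $((2-4\alpha-4\gamma)-c)/((1-4\alpha-4\gamma)(1-c))$, whose derivative with respect to $c$ is $1/(1-c)^2>0$, so it is maximized at $c=1/2$, where it evaluates to exactly $(3-8\alpha-8\gamma)/(1-4\alpha-4\gamma)$. The coefficient $\len/\lambda_1$ of $F^*_{q,y}$ is at most $2\len$. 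These match the claim.

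The main obstacle is a technicality: Lemma~\ref{lem:close} requires $e_{r,z}<e_{q,y}$, i.e.\ $o_{p,x}<e_{q,y}$. When $o'_{p,x}=o_{p,x}$ this is automatic since $o_{p,x}=o'_{p,x}\leq e_{q,y}-\lceil L/2\rceil$. In the degenerate regime $o'_{p,x}=e_{p,x}-\len$ with $o_{p,x}$ lying in the last $\len/2$ slots before $e_{p,x}$, however, the particular $E_{q,y}$ supplied by Lemma~\ref{lem:bdg_num} may satisfy $e_{q,y}\leq o_{p,x}$; this would be handled either by picking a different bridge event from the many available (Lemma~\ref{lem:bdg_num} guarantees $\geq 4\gamma L$ of them) with $e_{q,y}>o_{p,x}$, or by a direct ad hoc bound exploiting that $L=\len$ is small in this boundary regime.
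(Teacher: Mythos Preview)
Your main approach is essentially the paper's: chain two applications of Lemma~\ref{lem:close} through the bridge event, split $F_{q,y}=\fe_{q,y}+\fl_{q,y}$, and bound $\fl_{q,y}\le\len F^*_{q,y}$. Your joint optimization over $c=\delta/L$ is in fact the right way to get the coefficient $(3-8\alpha-8\gamma)/(1-4\alpha-4\gamma)$. The paper instead bounds the two links separately, claiming in its inequality~(2) that $\frac{o'_{p,x}-o'_{q,y}}{e_{q,y}-o'_{q,y}}\ge\frac{2(1-4\alpha-4\gamma)}{3-8\alpha-8\gamma}$ using the \emph{lower} bound on $e_{q,y}$; but that ratio is decreasing in $e_{q,y}$, so this step as written does not hold when $e_{q,y}$ is near $e_{p,x}-1$. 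It is only in the product $1/(\lambda_1\lambda'_2)$ that the worst case genuinely occurs at $c=1/2$, exactly as you compute. So your derivation here is the cleaner one.

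Where you do have a gap is the case $e_{r,z}\ge e_{q,y}$. Your first suggested fix, ``pick a different bridge event,'' does not work: the lemma is stated for an \emph{arbitrary} $E_{q,y}\in\mathcal{B}(E_{p,x})$, and it is used downstream (Lemma~\ref{lem:type2}) for every bridge event in a covering, so you cannot discard bridges. Your second suggestion, an ``ad hoc bound,'' is left vague. The paper's Case~1 handles this in one line and you should simply adopt it: if $e_{r,z}\ge e_{q,y}$ then $e_{r,z}=o_{p,x}\in[o'_{p,x}+\lceil L/2\rceil,\,e_{p,x}-1]$, so Lemma~\ref{lem:close} applied directly to $(E_{p,x},E_{r,z})$ with $\lambda=1/2$ gives $\fe_{p,x}\le 2F_{r,z}$, which already satisfies the claimed inequality since $\frac{3-8\alpha-8\gamma}{1-4\alpha-4\gamma}\ge 2$. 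No bridge is needed in this case at all.
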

\begin{proof}
  We start from an easy case that $e_{r,z} \geq e_{q,y}$. We have
  $\frac{1}{2}\fe_{p,x} \leq \frac{ e_{r,z} - o'_{p,x}}{e_{p,x} -
    o'_{p,x}} \fe_{p,x} \leq F_{r,z}$. The first inequality comes from
  that $e_{q,y} \geq o'_{p,x} + \lceil \frac{1}{2}(e_{p,x} - o'_{p,x}) \rceil$ and
  the second by Lemma~\ref{lem:close}. Thus it holds that $\fe_{p,x}
  \leq 2 F_{r,z}$, which clearly satisfies the lemma.

  Now let us consider the other case that $e_{r,z} < e_{q,y}$. By
  comparing $E_{p,x}$ and $E_{q,y}$, using Lemma~\ref{lem:close}, we
  have (1) $\frac{1}{2}\fe_{p,x} \leq \frac{ e_{q,y} -
    o'_{p,x}}{e_{p,x} - o'_{p,x}} \fe_{p,x} \leq F_{q,y}$. The first
  inequality holds because $e_{q,y} \geq o'_{p,x} + \lceil \frac{1}{2}(e_{p,x} -
  o'_{p,x}) \rceil$ and the second by Lemma~\ref{lem:close}. Next we compare $E_{q,y}$ and $E_{r,z}$. It follows that
  (2) $\frac{2(1-4\alpha - 4\gamma)}{3 - 8\alpha - 8\gamma} \fe_{q,y}
  \leq \frac{ o'_{p,x} - o'_{q,y}}{e_{q,y} - o'_{q,y}} \fe_{q,y} \leq
  \frac{ e_{r,z} - o'_{q,y}}{e_{q,y} - o'_{q,y}} \fe_{q,y} \leq
  F_{r,z}$. The first inequality can be shown by easy calculation
  using the fact that $o'_{q,y} \leq e_{p,x} - (2 -4\alpha -
  4\gamma)(e_{p,x} - o'_{p,x})$ and $e_{q,y} \geq o'_{p,x} + \lceil 1/2(e_{p,x} - o'_{p,x}) \rceil$.
  The second follows from that $o'_{p,x} \leq o_{p,x} =
  e_{r,z}$. Combining (1) and (2), we get $\fe_{p,x} \leq 2F_{q,y} = 2
  (\fe_{q,y} + \fl_{q,y}) \leq \frac{3 - 8\alpha - 8\gamma}{1 -
    4\alpha - 4\gamma} F_{r,z} + 2 \len F^*_{q,y}$.
\end{proof}

\begin{remark}
Lemma~\ref{lem:bdg} holds for any event $E_{r,z}$ such that $e_{r,z} \in [o'_{p,x}, e_{p,x} -1]$. But we only need to
consider the case where $e_{r,z} = o_{p,x}$ for our charging scheme.
\end{remark}

By taking the average of the inequalities in Lemma~\ref{lem:bdg} over the $s=4$ events ending at $o_{p,x}$, we have the
following corollary.

\begin{corollary}
    \label{cor:bdg}
    Suppose that $4 \gamma \len \geq 1$. Let $E_{p,x} \in N_2$ and
    $E_{q,y} \in \mathcal{B}(E_{p,x})$. \\ Then, $\fe_{p,x} \leq \frac{3
      - 8\alpha - 8\gamma}{4(1 - 4\alpha - 4\gamma)} (\sum_{E_{r,z}| e_{r,z} =
      o_{p,x}} F_{r,z}) + \frac{\len}{2} F^*_{q,y}$
\end{corollary}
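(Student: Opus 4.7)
The proof I have in mind is a straightforward averaging argument, since Lemma \ref{lem:bdg} already does the hard work of comparing $\fe_{p,x}$ to a single flow-time $F_{r,z}$ through the bridge event $E_{q,y}$. The first step is to observe that because $\lwf_4$ broadcasts $s=4$ pages in every single time slot, there are exactly $4$ events $E_{r,z}$ satisfying $e_{r,z} = o_{p,x}$ (these are the four pages transmitted at time $o_{p,x}$).

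Next, for each of these four events $E_{r,z}$, I would invoke Lemma \ref{lem:bdg} using the fixed bridge event $E_{q,y} \in \mathcal{B}(E_{p,x})$ given in the hypothesis. This is legal because the hypothesis $4\gamma\len \ge 1$ is exactly the hypothesis required by Lemma \ref{lem:bdg}, and $E_{q,y}$ being a bridge event does not depend on the choice of $E_{r,z}$. Each application produces the bound
\[
\fe_{p,x} \;\le\; \frac{3-8\alpha-8\gamma}{1-4\alpha-4\gamma}\, F_{r,z} + 2\len\, F^*_{q,y}.
\]

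Finally, I would sum these four inequalities and divide by $4$. The left-hand side stays as $\fe_{p,x}$, the first terms aggregate into $\frac{3-8\alpha-8\gamma}{4(1-4\alpha-4\gamma)} \sum_{E_{r,z} \mid e_{r,z}=o_{p,x}} F_{r,z}$, and the coefficient on the (common) $F^*_{q,y}$ term is divided by $4$ to yield the stated $\frac{\len}{2}$. This last scaling is the whole point of the averaging: spreading the cost of a single bridge term across the four simultaneously broadcast pages at time $o_{p,x}$.

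There is essentially no obstacle here. The only thing to check carefully is that the same bridge event $E_{q,y}$ can be reused across all four invocations of Lemma \ref{lem:bdg} (it can, because the lemma's hypothesis on $E_{q,y}$ depends only on $E_{p,x}$, not on $E_{r,z}$), and that all four events $E_{r,z}$ ending at $o_{p,x}$ genuinely exist (which they do by the assumption that $\lwf_4$ is never idle and broadcasts at full speed $4$).
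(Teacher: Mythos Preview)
Your approach is exactly the paper's: the text introducing the corollary says it follows ``by taking the average of the inequalities in Lemma~\ref{lem:bdg} over the $s=4$ events ending at $o_{p,x}$,'' which is precisely what you do.

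One arithmetic slip, though. When you sum the four instances of Lemma~\ref{lem:bdg} and divide by $4$, the $F^*_{q,y}$ term is the \emph{same} in all four inequalities, so the sum contributes $4\cdot 2\len\,F^*_{q,y}$ and after dividing by $4$ you are left with $2\len\,F^*_{q,y}$, not $\tfrac{\len}{2}\,F^*_{q,y}$. In other words, averaging does not reduce a constant that appears identically in every summand. The stated coefficient $\tfrac{\len}{2}$ in the corollary thus appears to be a typo in the paper (it should read $2\len$); this propagates to the constant in Lemma~\ref{lem:type2} but does not affect the $O(1)$-competitiveness conclusion, since only the choice of $\len$ in Theorem~\ref{thm:4spd} would need adjusting. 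Your reasoning is otherwise sound, including the two points you flag: the bridge event $E_{q,y}$ is fixed and reusable across all four applications, and the four events at time $o_{p,x}$ exist because $\lwf_4$ is assumed fully busy.
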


Note that in Lemma~\ref{lem:bdg}, $\fe_{p,x}$ is bounded not only with $F_{r,z}$ but also with $F^*_{q,y}$, which
contributes to $\opt$. If many events use $E_{q,y}$ as their bridges, $E_{q,y}$ can be overcharged. To avoid this, we
found many bridge candidates for each $\two$ event in Lemma~\ref{lem:bdg_num}. Using the modified Hall's theorem, we
will bound the number of events which use the same bridge event.

Now we are ready to bound early requests of $\two$ events, i.e. $\lwfnetwo$. Recall that each $\two$ event $E_{p,x}$ is
charged to the $s=4$ events which are finished at $o_{p,x}$. Note that $E_{r,z}$ is used only by $E_{p,x}$ since
$E_{p,x}$ is the only event such that $o_{p,x} = e_{r,z}$. Thus $E_{r,z}$ is not overcharged.

\begin{lemma}
    \label{lem:type2}
    If $4 \gamma \len \geq 1$, $\lwfnetwo_4 \leq \frac{3 - 8\alpha - 8\gamma}{4(1 - 4\alpha -
      4\gamma)} \lwf_4 + \frac{\len}{4\gamma} \optn$.
\end{lemma}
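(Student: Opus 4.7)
The plan is to combine Corollary~\ref{cor:bdg} with a fractional Hall's argument using Lemma~\ref{lem:cover2}. For each $E_{p,x} \in N_2$, Corollary~\ref{cor:bdg} bounds $\fe_{p,x}$ by the sum of two terms: $\frac{3-8\alpha-8\gamma}{4(1-4\alpha-4\gamma)}\sum_{E_{r,z}: e_{r,z}=o_{p,x}} F_{r,z}$ and $\frac{\len}{2} F^*_{q,y}$, where the latter can use \emph{any} bridge $E_{q,y} \in \mathcal{B}(E_{p,x})$. The first summand is safe from overcharging because $o_{p,x}$ is unique to $E_{p,x}$: two distinct non-self-chargeable events cannot share the same last-OPT-broadcast time (OPT has speed $1$, and for the same page two events have disjoint intervals). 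Hence summing the first summand over $u_{p,x} \in X = N_2$ simply picks out a disjoint set of events of $\lwf_4$, giving at most $\frac{3-8\alpha-8\gamma}{4(1-4\alpha-4\gamma)}\lwf_4$. The second summand, however, risks overcharging shared bridges, and this is where fractional Hall's theorem is invoked.

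Concretely, I would set up a bipartite graph $G = (X \cup Y, E)$ where $X = \{u_{p,x} \mid E_{p,x} \in N_2\}$, $Y = \{v_{q,y} \mid E_{q,y} \in N\}$ (bridges must lie in $N$ since $o'_{q,y}$ is only defined for non-self-chargeable events), and $u_{p,x}v_{q,y} \in E$ iff $E_{q,y} \in \mathcal{B}(E_{p,x})$. By Lemma~\ref{lem:bdg_num}, each $u_{p,x}$ has at least $4\gamma(e_{p,x}-o'_{p,x})$ neighbors, and by definition every bridge $E_{q,y}$ satisfies $e_{q,y} \in [o'_{p,x}, e_{p,x}-1]$. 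Applying Lemma~\ref{lem:cover2} with $\lambda = 4\gamma$ produces a $\frac{1}{2\gamma}$-covering $\ell$ of $X$.

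I would then multiply the inequality of Corollary~\ref{cor:bdg} by $\ell_{u_{p,x},v_{q,y}}$ and sum over all edges of $G$. Using $\sum_{v_{q,y} \in N_G(u_{p,x})} \ell_{u_{p,x},v_{q,y}} = 1$, the LHS collapses to $\sum_{u_{p,x} \in X} \fe_{p,x} = \lwfnetwo_4$. On the RHS, swapping the order of summation turns the first term into $\frac{3-8\alpha-8\gamma}{4(1-4\alpha-4\gamma)}\sum_{u_{p,x}}\sum_{E_{r,z}: e_{r,z}=o_{p,x}} F_{r,z} \leq \frac{3-8\alpha-8\gamma}{4(1-4\alpha-4\gamma)}\lwf_4$ by the uniqueness of $o_{p,x}$. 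Similarly, the second term becomes $\frac{\len}{2}\sum_{v_{q,y} \in Y} F^*_{q,y}\sum_{u_{p,x}: u_{p,x}v_{q,y}\in E}\ell_{u_{p,x},v_{q,y}} \leq \frac{\len}{2}\cdot\frac{1}{2\gamma}\sum_{v_{q,y} \in Y} F^*_{q,y} \leq \frac{\len}{4\gamma}\optn$, where the middle inequality uses that $\ell$ is a $\frac{1}{2\gamma}$-covering, and the last uses that $Y \subseteq N$.

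The main obstacle is the clean application of Lemma~\ref{lem:cover2}, which requires verifying both that each $u_{p,x}$ has enough neighbors (the threshold $4\gamma(e_{p,x}-o'_{p,x})$ supplied exactly by Lemma~\ref{lem:bdg_num}, using $4\gamma\len \geq 1$ to ensure at least one bridge exists) and that every bridge lies within $[o'_{p,x}, e_{p,x}-1]$, so that the covering lemma's structural hypothesis is met. The other delicate step is justifying the $o_{p,x}$-uniqueness observation that prevents double-counting on the $\sum F_{r,z}$ side; once this is in hand the rest of the manipulation is routine summation swapping.
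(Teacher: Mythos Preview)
Your proposal is correct and essentially identical to the paper's proof: the paper builds the same bipartite graph $G$ with $X=N_2$, $Y=N$, and edges given by the bridge relation, invokes Lemma~\ref{lem:bdg_num} and Lemma~\ref{lem:cover2} with $\lambda=4\gamma$ to obtain a $\frac{1}{2\gamma}$-covering, applies Corollary~\ref{cor:bdg} edge-by-edge, and bounds the two resulting sums exactly as you describe (using uniqueness of $o_{p,x}$ for the first term and the covering bound for the second). Your justification of the $o_{p,x}$-uniqueness and your check that bridges lie in $[o'_{p,x},e_{p,x}-1]$ (so that Lemma~\ref{lem:cover2} applies) are precisely the points the paper relies on.
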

\begin{proof}
  Let $G=(X \cup Y,E)$ be a bipartite graph where $u_{p,x} \in
  X$ iff $E_{p,x} \in N_2$, $v_{q,y} \in Y$ iff $E_{q,y} \in N$ and
  $u_{p,x} v_{q,y} \in E$ iff $E_{q,y} \in \mathcal{B}(E_{p,x})$. By
  Lemma~\ref{lem:bdg_num}, $u_{p,x} \in X$ has at least $ 4\gamma
  (e_{p,x}-o'_{p,x})$ neighbors, hence by Lemma~\ref{lem:cover2}, $G$
  has $ \frac{1}{2\gamma}$-covering. Let $\ell'$ be such a covering.
  Now we are ready to prove the final step. For simplicity,
  let $k = \frac{3 - 8\alpha - 8\gamma}{4(1 - 4\alpha - 4\gamma)}$.
\begin{eqnarray*}
  \lwfnetwo_{4}
  &=   &  \sum_{u_{p,x} \in X} \fe_{p,x} = \sum_{u_{p,x}v_{q,y} \in E} \ell'_{u_{p,x},v_{q,y}} \fe_{p,x} \mbox{[By Definition~\ref{def:covering}]}  \\\\
  &\leq&  \sum_{u_{p,x}v_{q,y} \in E} \ell'_{u_{p,x},v_{q,y}} (k \sum_{E_{r,z} | e_{r,z} = o_{p,x}} F_{r,z} + \frac{\len}{2} F^*_{q,y})   \mbox{[By Corollary~\ref{cor:bdg}]}\\
  &=   &  k \sum_{u_{p,x} \in X} \sum_{E_{r,z}| e_{r,z} = o_{p,x}} F_{r,z} + \frac{\len}{2} \sum_{v_{q,y} \in Y} F^*_{q,y} \sum_{u_{p,x} \in X} \ell'_{u_{p,x},v_{q,y}} \\
  &\leq&  k \lwf_4 + \frac{\len}{2} \sum_{v_{q,y} \in Y} F^*_{q,y} \frac{1}{2\gamma}        \mbox{ [By (*) and $\ell'$ being a $\frac{1}{2\gamma}$-covering]}  \\
  &\leq&  k \lwf_4 + \frac{\len}{4\gamma} \optn      \mbox{ [Since $Y$ include all non-self-chargeable events]}  \\
\end{eqnarray*}
It holds that (*)~$\sum_{u_{p,x} \in X} \sum_{E_{r,z}|e_{r,z} = o_{p,x}} F_{r,z} \leq \lwf_4$, because for each
non-self-chargeable $E_{r,z}$ there is only one event $E_{p,x}$ such that $e_{r,z} = o_{p,x}$.
\end{proof}

\begin{remark}
  If non-integer speeds are allowed then the analysis in this
  subsection can be extended to show that $\lwf$ is ${3.4+ \eps}$-speed
  $O(1+1/\eps^3)$-competitive.
\end{remark}

\section{Omitted Proofs}
\subsection{Proof of Lemma~\ref{lem:intervals}}
\begin{proof}
Let $I$ be the union of all intervals in $X$. $I'$ is similarly defined for $X'$. We prove the lemma when $I'$ is a
contiguous interval; otherwise we can simply sum over all maximal intervals in $I'$. WLOG, we can set $I = [s_1, t']$
and $I'= [s', t']$. This is because $I$ must start with one interval in $X$, say $[s_1, t_1]$ and both $I$ and $I'$
must have the same ending point $t'$ by construction. Since $s \leq  s'_1$, it is enough to show that $\frac{t -
s'_1+1}{t - s_1 +1} \geq \lambda$ and it follows from the given condition that $|[s'_1, t_1]| \geq \lambda |[s_1,
t_1]|$, (i.e. $t_1 - s'_1 + 1 \geq \lambda(t_1 - s_1 +1)$) and $t \geq t_1$.
\end{proof}

\end{document}